\newtheorem{theorem}{Theorem}
\newtheorem{lemma}{Lemma}
\newtheorem{proposition}{Proposition}
\theoremstyle{definition}
\newtheorem{definition}{Definition}
\newtheorem{remark}{Remark}
\def\spacingset#1{\renewcommand{\baselinestretch}%
{#1}\small\normalsize} \spacingset{1}
\newcommand{\argmin}{\operatorname*{\arg\min}}
\newcommand{\argmax}{\operatorname*{\arg\max}}
\newcommand{\E}{\operatorname{\mathbb{E}}} 
\newcommand{\I}{\operatorname{I}} 
\newcommand{\Cov}{\operatorname{Cov}}
\newcommand{\dto}{\overset{d}{\longrightarrow}}
\newcommand{\pa}{\textnormal{\textsc{pa}}}
\newcommand{\an}{\textnormal{\textsc{an}}}
\newcommand{\ca}{\textnormal{\textsc{ca}}}
\newcommand{\iv}{\textnormal{\textsc{iv}}}
\newcommand{\me}{\textnormal{\textsc{me}}}
\newcommand{\nui}{\textnormal{\textsc{nm}}}
\newcommand{\Diag}{\operatorname{Diag}} 
\newcommand{\inter}{\textsc{in}} 
\newcommand{\bs}[1]{\boldsymbol{\mathbf{#1}}}  
\newcommand{\blind}{0}
\newcommand{\Title}{\Large\bf Discovery and inference of a causal network with hidden confounding}
\begin{document}

\if0\blind
{
    \title{\Title
    \thanks{
        L.~Chen and C.~Li contributed equally. 
        Research is supported in part by NSF grant DMS-1952539, NIH grants R01GM113250, R01GM126002, R01AG065636, R01AG074858, R01AG069895, U01AG073079.
        The authors report there are no competing interests to declare. 
        The authors thank the editor, the associate editor, and three anonymous referees for their helpful comments and suggestions. 
        C.~Li would like to thank R.~Oliver~VandenBerg for the suggestions on writing.
    }}
    \author{Li Chen\footnote{School of Statistics, University of Minnesota, Minneapolis, MN 55455.} 
    \and Chunlin Li\footnote{To whom correspondence should be addressed. Department of Statistics, Iowa State University, Ames, IA 50011. Email: \url{chunlin@iastate.edu}.} 
    \and Xiaotong Shen\footnote{School of Statistics, University of Minnesota, Minneapolis, MN 55455.} 
    \and Wei Pan\footnote{Division of Biostatistics, University of Minnesota, Minneapolis, MN 55455.}}
    \date{}
    \maketitle
} \fi
\if1\blind
{
    \title{\Title}
    \author{}
    \date{}
    \maketitle
} \fi

\begin{abstract}\footnotesize 
 This article proposes a novel causal discovery and inference method called GrIVET for a Gaussian directed acyclic graph with unmeasured confounders. GrIVET consists of an order-based causal discovery method and a likelihood-based inferential procedure. For causal discovery, we generalize the existing peeling algorithm to estimate the ancestral relations and candidate instruments in the presence of hidden confounders. Based on this, we propose a new procedure for instrumental variable estimation of each direct effect by separating it from any mediation effects. For inference, we develop a new likelihood ratio test of multiple causal effects that is able to account for the unmeasured confounders. Theoretically, we prove that the proposed method has desirable guarantees, including robustness to invalid instruments and uncertain interventions, estimation consistency, low-order polynomial time complexity, and validity of asymptotic inference. Numerically, GrIVET performs well and compares favorably against state-of-the-art competitors. Furthermore, we demonstrate the utility and effectiveness of the proposed method through an application inferring regulatory pathways from Alzheimer's disease gene expression data.
\end{abstract}

\noindent
Keywords: Causal discovery, Gaussian directed acyclic graph, Invalid instrumental variables, Uncertain interventions, Simultaneous inference, Gene regulatory network.

\spacingset{1} 

\section{Introduction}

Understanding causal relations is part of the foundation of intelligence. A directed acyclic graph (DAG) is often used to describe the causal relations among multiple interacting units \citep{pearl2009causality}. Unlike classical causal inference tasks where the DAG is determined a priori, causal discovery aims to learn a graphical representation from data. It is useful for forming data-driven conjectures about the underlying mechanism of a complex system, including gene networks \citep{sachs2005causal}, functional brain networks \citep{liu2017effective}, manufacturing pipelines \citep{kertel2022learning}, and dynamical systems \citep{li2020causal}. In such a situation, randomized experiments are usually unethical or infeasible, and unmeasured confounders commonly arise in practice. The presence of latent confounders can bias the causal effect estimation and even distort causal directions, making causal discovery challenging. To treat latent confounders, we use additive interventions as instrumental variables (IVs), which are well-developed in conventional causal inference \citep{angrist1996identification} yet are less explored in causal discovery of a large-scale network. In this article, we focus on a Gaussian DAG model with hidden confounders and develop methods that integrate the discovery and inference of causal relations within the framework of uncertain additive interventions (the targets of interventions are unknown).

Causal discovery has been extensively studied \citep{zheng2018dags,aragam2019globally,gu2019penalized,lee2022functional,zhao2022learning,li2023nonlinear}; see \cite{drton2017structure,heinze2018causal,glymour2019review,vowels2021d} for comprehensive reviews. For observational data (without external interventions), some methods are able to treat hidden confounding by either (a) producing less informative discoveries, like a partial ancestral graph \citep{colombo2012learning} rather than a DAG, or (b) employing a certain deconfounding strategy \citep{frot2019robust,shah2020right} based on the pervasive confounding assumption. However, the former may not reveal essential information, such as causal directions, while the latter can be inconsistent in low-dimensional situations and may not necessarily outperform the naive regression \citep{grimmer2020naive}.
Thus, external interventions are useful to provide more information about causal relations while relaxing the requirements on latent confounding.

As an example of external (additive) interventions, IVs have been well developed in conventional causal inference to tackle unmeasured confounding; see \cite{lousdal2018introduction} for a survey. In a classical bivariate setting where the causal direction is known, an IV is required to influence the response variable only through the cause variable, which is often fragile in practice \citep{murray2006avoiding}. For instance, genetic variants like single nucleotide polymorphisms (SNPs) are used as IVs in Mendelian randomization (MR) analysis to discover putative causal genes of complex traits, where the IV conditions are commonly violated due to the (horizontal) pleiotropy. Remedying these invalid IVs has been the subject of recent work in causal inference \citep{kang2016instrumental,guo2018confidence,windmeijer2019use,burgess2020robust}. The discussion of IV estimation in graphical modeling, however, remains limited. The methods of \cite{oates2016estimating,chen2018two} estimate the graph given valid IVs, while the work of \cite{li2023inference} propose the peeling algorithm to construct the DAG in the case of uncertain interventions and invalid IVs. None of these methods permit latent confounding. A recent work \citep{xue2020inferring} discusses causal discovery of a bivariate mixed effect graph where confounders and invalid IVs are allowed, but it remains unclear how to extend it to a large-scale causal network.

Moreover, despite the progress in causal discovery, inference about the discovered relations is often regarded as a separate task and has received less attention in the literature. Notable exceptions include recent advances in graphical modeling \citep{jankova2018inference,li2020likelihood,shi2023testing,wang2023confidence} and mediation analysis \citep{chakrabortty2018inference,shi2021testing,li2022sequential}; however, these methods cannot account for latent confounders. Indeed, due to unmeasured confounding, the probability distribution of observed variables is no longer locally Markovian with respect to the DAG \citep{pearl2009causality}, rendering  these approaches inappropriate. Consequently, there is a pressing need for new inference methodologies.

This article contributes to the following aspects.
\begin{itemize}
 \item For modeling, we establish the identifiability conditions for a Gaussian DAG with latent confounders utilizing additive interventions. To our knowledge, this result is the first of its kind. Importantly, the conditions allow the interventions to have unknown and multiple targets, which is suitable for multivariate causal analysis \citep{murray2006avoiding}. 

 \item For methodology, we develop a novel method named the Graphical Instrumental Variable Estimation and Testing (GrIVET), integrating order-based causal discovery and likelihood-based inference. For causal discovery, we estimate the ancestral relations and candidate IVs with a modified peeling algorithm to treat unmeasured confounding. On this basis, we propose a sequential procedure to estimate each direct effect using IVs, where a working response regression is used to separate the direct effect from the mediation effects. Regarding inference, we develop a new likelihood ratio test of multiple causal effects to account for unmeasured confounders. 

  \item For theory, we show that GrIVET enjoys desired guarantees. 
  In particular, it consistently estimates the DAG structure and causal effects even when some interventions do not meet the IV criteria. As for computation, only $O((p+|\mathcal E^{+}|)\times \log(s) \times (q^3 + nq^2))$ operations are required almost surely, where $p$ and $q$ are the numbers of primary and intervention variables, $s$ is sparsity, $|\mathcal E^+|$ is the size of the ancestral relation set, and $n$ is the sample size. Moreover, under the null hypothesis, we establish the convergence of the likelihood ratio statistic to the null distribution in high-dimensional situations, ensuring the validity of asymptotic inference.
  
  \item The simulation studies and an application to the Alzheimer's Disease Neuroimaging Initiative dataset demonstrate the utility and effectiveness of the proposed methods. The implementation of GrIVET is available at \url{https://github.com/chunlinli/grivet}.

\end{itemize}

The rest of the article is structured as follows.
Section \ref{section:model} introduces a linear structural equation model with hidden confounders and establishes its identifiability.
Section \ref{section:discovery} presents a novel order-based method for causal discovery and effect estimation.
Section \ref{section:inference} develops a likelihood ratio test for simultaneous inference of causal effects.
Section \ref{section:causal-discovery-theory} provides theoretical justification of the proposed method.
Section \ref{section:numerical} performs simulation studies, followed by an application to infer gene pathways with gene expression and SNP data. 
Finally, Section \ref{section:discussion} concludes the article.
The Appendix contains supporting lemmas, while the Supplementary Materials include illustrative examples, technical proofs, and additional simulations.

\section{Causal graphical model with confounders}\label{section:model}

\subsection{Structural equations with confounders}\label{section:sem-dag}

We consider a structural equation model with $p$ primary variables $\bm Y = (Y_1,\ldots,Y_p)^\top$ 
and $q$ intervention variables $\bm X=(X_1,\ldots, X_q)^\top$,
\begin{equation}\label{equation:model}
  \begin{split}
  \bm Y = \mathbf{U}^\top \bm Y + \mathbf{W}^\top \bm X + \bm \varepsilon,
  \quad \bm\varepsilon \sim N(\bm 0,\bm\Sigma),
  \quad \Cov(\bm \varepsilon, \bm X) = \bm 0,
  \end{split}
\end{equation}
where $\mathbf{U}_{p\times p}$ is a matrix describing the causal influences among $\bm Y$, $\mathbf{W}_{q\times p}$ is a matrix representing the interventional effects of $\bm X$ on $\bm Y$, and $\bm \varepsilon$ is a vector of possibly correlated errors.
Specifically, 
\begin{itemize}
    \item The parameter matrix $\mathbf{U}$, which is of primary interest, has a causal interpretation in that $\mathrm{U}_{kj}\neq 0$ indicates that $Y_k$ is a cause of $Y_j$, denoted by $Y_k\to Y_j$. Thus, $\mathbf U$ represents a directed graph among primary variables. In what follows, we will focus on a directed acyclic graph (DAG), where no directed cycle is permissible and $\mathbf U$ is subject to the acyclicity constraint \citep{zheng2018dags,yuan2019constrained}. 
    
    \item The intervention variables $\bm X$ and errors $\bm\varepsilon$ are uncorrelated by reparameterization. As a result, $\mathbf W$ is associational instead of causal. Here, $\mathrm{W}_{lj}\neq 0$ indicates that $X_l$ intervenes on $Y_j$, denoted by $X_l\to Y_j$. As $\bm X$ represents external interventions, no directed edge from a primary variable to an intervention variable is allowed.
    
    \item A non-diagonal $\bm\Sigma$ indicates the presence of unmeasured confounders. For instance, $\bm\varepsilon = \bm\Phi^\top\bm\eta + \bm e$ can be (not uniquely) written as a sum of correlated components $\bm\Phi^\top\bm\eta$ and independent components $\bm e$ so that $\bm\Sigma = \bm\Phi^\top\bm\Phi + \Diag(\sigma_1^2,\ldots,\sigma_p^2)$, where $\bm\Phi_{r\times p}$ is the matrix of confounding effects, $\bm\eta\sim N(\bm 0,\mathbf I_{r\times r})$ represents $r$ independent confounding sources, and $\bm e\sim N(\bm 0,\Diag(\sigma_1^2,\ldots,\sigma_p^2))$ represents $p$ independent errors. Whenever $\Sigma_{jk}\neq 0$ for some distinct $(j,k)$, we have $\Sigma_{jk}=\sum_{m=1}^r \Phi_{mj}\Phi_{mk}\neq 0$, implying that some confounding variable $\eta_m$ influences both $Y_j$ and $Y_k$.
    
\end{itemize}
As such, $(\mathbf U,\mathbf W)$ together represents a directed graph of $p$ primary variables and $q$ intervention variables, denoted as $\mathcal G = (\bm X,\bm Y; \mathcal E,\mathcal I)$, where $\mathcal E = \{(k,j): \mathrm{U}_{kj}\neq 0\}$ is the set of primary variable edges and $\mathcal I = \{(l,j): \mathrm W_{lj}\neq 0\}$ is the set of intervention edges. 
In $\mathcal G$, (a) if $Y_k\to Y_j$, then $Y_k$ is a parent of $Y_j$, and $Y_j$ is a child of $Y_k$, (b) if $Y_k\to \cdots \to Y_j$ (a directed path from $Y_k$ to $Y_j$), then $Y_k$ is an ancestor of $Y_j$, and $Y_j$ is a descendant of $Y_k$, and (c) if $Y_k \to \cdots \to Y_m \to \cdots \to Y_j$, then $Y_m$ is a mediator of $Y_k$ and $Y_j$.
In what follows, for a graph $\mathcal G$, denote the parent set of $Y_j$ as $\pa_{\mathcal G}(j) = \{ k : Y_k\to Y_j \}$, the ancestor set of $Y_j$ as $\an_{\mathcal G}(j) = \{ k : Y_k\to\cdots\to Y_j \}$, and the intervention set of $Y_j$ as $\inter_{\mathcal G}(j) = \{l : X_l\to Y_j\}$.
For $(k,j)$ such that $Y_k\to\cdots \to Y_j$, denote the mediator set as $\me_{\mathcal G}(k,j) = \{ m : Y_k\to\cdots\to Y_m\to\cdots\to Y_j\}$.

\subsection{Identifiability and instrumental variables}\label{section:iv-identifiability}

The causal parameter matrix $\mathbf U$ is generally non-identifiable\footnote{{The causal parameter $\mathbf U$ is said to be identifiable if for any $(\mathbf U,\mathbf W,\bm\Sigma)$ and $(\mathbf U',\mathbf W',\bm\Sigma')$, we have $\mathbb P_{\mathbf U,\mathbf W,\bm\Sigma} = \mathbb P_{\mathbf U',\mathbf W',\bm\Sigma'}$ implies $\mathbf U = \mathbf U'$. Otherwise, it is said to be non-identifiable.}} without further conditions on the Gaussian errors $\bm\varepsilon$ or the interventions $\bm X$. 
Without invoking external interventions ($\mathbf W \equiv \bm 0$), $\mathbf U$ can be identified under a certain error-scale assumption \citep{peters2014identifiability,ghoshal2018learning,rajendran2021structure}, which is sensitive to variable scaling such as the common practice of standardizing variables \citep{reisach2021beware}. 
To overcome this limitation, interventions are introduced to identify the causal parameters.
With suitable interventions, $\mathbf U$ is identifiable if no confounder is present in the model ($\bm\Sigma$ is diagonal) \citep{oates2016estimating,chen2018two,li2023inference}.
In addition, it is worth mentioning that $\mathbf U$ can be estimated without intervention if the errors $\bm\varepsilon$ are non-Gaussian \citep{shimizu2006linear,zhao2022learning}; however, such methods are not applicable in the case of unmeasured confounding.

This subsection establishes the identifiability of \eqref{equation:model} in the presence of unmeasured confounders using uncertain additive interventions (the targets of interventions are unknown) as IVs. 
To proceed, we introduce the notion of IV for our purpose.

\begin{definition}\label{definition:valid-iv}
    An intervention variable $X_l$ is said to be a valid IV of $Y_k$ in $\mathcal G$ if \textbf{(IV1)} $X_l$ intervenes on $Y_k$, namely $\mathrm W_{lk}\neq 0$, and \textbf{(IV2)} $X_l$ does not intervene on any other primary variable $Y_{k'}$, namely $\mathrm W_{lk'} = 0$ for $k'\neq k$.
    Otherwise, $X_l$ is called an invalid IV.
    Denote the valid IV set of $Y_k$ as $\iv_{\mathcal{G}}(k) = \{l : X_l \rightarrow Y_k, X_l\not\to Y_{k'}, k'\neq k \}$.
\end{definition}

\begin{remark} \label{remark:iv-conditions} 
Consider a bivariate case where we are interested in the potential causal effect $Y_1\to Y_2$.
In causal inference literature \citep{angrist1996identification,kang2016instrumental}, a valid IV $X$ of $Y_1$ is required to satisfy that (a) $X$ is related to the $Y_1$, referred to as relevance, (b) $X$ has no directed edge to $Y_2$, called exclusion, and (c) $X$ is not related to unmeasured confounders, called unconfoundedness. 
In \eqref{equation:model}, (IV1) is indeed the relevance property, (IV2) generalizes the exclusion property for causal discovery, and the requirement $\Cov(\bm\varepsilon,\bm X)=\bm 0$ corresponds to the unconfoundedness.  
\end{remark}

To identify $\mathbf U$, two challenges emerge as the confounders arise. First, determining causal directions in the graph becomes more challenging. In \eqref{equation:model}, because of hidden confounding, the distribution $\mathbb P(\bm Y\mid \bm X)$ does not admit the causal Markov property \citep{pearl2009causality} according to $\mathcal G$, that is, $Y_j$ is not independent of its non-descendants given $(\bm Y_{\pa_{\mathcal G}(j)},\bm X)$.
As a result, the existing methods based on this property can learn wrong causal directions due to misspecification. To identify causal directions, we formalize the concept of unmediated parents to highlight the causal relations that are critical in identification. 

\begin{definition}
    A primary variable $Y_k$ is an unmediated parent of $Y_j$ in $\mathcal G$ if $Y_k\to Y_j$ and there is no other directed path from $Y_k$ to $Y_j$. In other words, $Y_k$ is an unmediated parent of $Y_j$ if no mediator is between $Y_k$ and $Y_j$.
\end{definition}

{Another challenge comes from uncertain interventions and invalid IVs. Assigning valid IVs for each primary variable can be difficult when the targets of interventions are unknown.} Thus, it may be effective to construct a set of candidate IVs (including invalid IVs)
for each primary variable, on which we estimate the causal parameters $\mathbf U$.
To this end, we define $p$ candidate IV sets, one for each primary variable. 

\begin{definition}\label{definition:candidate-iv}
    An intervention variable $X_l$ is said to be a candidate IV of $Y_k$ in $\mathcal{G}$ if \textbf{(IV1')} $X_l$ intervenes on $Y_k$, and \textbf{(IV2')} $X_l$ does not intervene on any non-descendant of $Y_k$.
    Denote the candidate IV set of $Y_k$ by $\ca_{\mathcal G}(k) = \{l : X_l \rightarrow Y_k, X_l \rightarrow Y_j \text{ only if } k \in \an_{\mathcal{G}}(j)\}$. 
\end{definition}

The candidate IVs of $Y_k$ include all valid IVs of $Y_k$, but not vice versa. A candidate IV of $Y_k$ may be invalid, as it could intervene on descendants of $Y_k$.

\begin{theorem}[Identifiability] \label{theorem:identifiability} 
Suppose 
\begin{enumerate}
  \item [(A1)] $\Cov(\bm X)$ is positive definite.
  
  \item [(A2)] $\Cov(Y_j,  X_l \mid  \bm X_{\{1,\ldots,q\}\setminus\{l\}})\neq 0$ whenever $X_l$ intervenes on an unmediated parent of $Y_j$.
  
  \item [(A3)] \emph{(Majority rule)} $|\iv_{\mathcal G}(k)|>|\ca_{\mathcal G}(k)|/2$; $k=1,\ldots,p$.
\end{enumerate}
Then $(\mathbf{U},\mathbf{W},\bm \Sigma)$ in \eqref{equation:model} are identifiable in that if $(\mathbf U, \mathbf W, \bm\Sigma)$ and $(\mathbf U', \mathbf W', \bm\Sigma')$ encode the same probability distribution, then $(\mathbf U, \mathbf W, \bm\Sigma) = (\mathbf U', \mathbf W', \bm\Sigma')$. 
\end{theorem}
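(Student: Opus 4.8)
The plan is to reduce identifiability of the triple $(\mathbf U,\mathbf W,\bm\Sigma)$ to that of $\mathbf U$ alone, then recover $\mathbf U$ in two stages --- first its ancestral (topological) structure, then its nonzero entries via instruments and the majority rule. For the reduction: $(\bm X,\bm Y)$ is jointly Gaussian, so the law determines $\Cov(\bm X)$, the population least-squares coefficient matrix $\bm\Pi:=\mathbf W(\mathbf I-\mathbf U)^{-1}$ of $\bm Y$ on $\bm X$ (uniquely, since $\Cov(\bm X)$ is invertible by (A1)), and the conditional covariance $\bm\Omega:=(\mathbf I-\mathbf U^\top)^{-1}\bm\Sigma(\mathbf I-\mathbf U)^{-1}$. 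Because $\mathbf W=\bm\Pi(\mathbf I-\mathbf U)$ and $\bm\Sigma=(\mathbf I-\mathbf U^\top)\bm\Omega(\mathbf I-\mathbf U)$, it suffices to show that any two parameter triples obeying (A1)--(A3) and inducing the same $(\Cov(\bm X),\bm\Pi,\bm\Omega)$ share the same $\mathbf U$.

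For the ancestral structure, expand $(\mathbf I-\mathbf U)^{-1}=\sum_{t\ge 0}\mathbf U^{t}$, so that $\Pi_{lj}=\sum_{k}\mathrm W_{lk}[(\mathbf I-\mathbf U)^{-1}]_{kj}$ is a sum over directed paths from the intervention targets of $X_l$ to $Y_j$; since $\Pi_{lj}\ne 0$ iff $\Cov(Y_j,X_l\mid\bm X_{-l})\ne 0$ by (A1) (writing $\bm X_{-l}$ for $\bm X_{\{1,\ldots,q\}\setminus\{l\}}$), a nonzero $\Pi_{lj}$ forces $X_l$ to intervene on some member of $\an_{\mathcal G}(j)\cup\{j\}$, while (A2) gives the converse for short paths: if $X_l$ intervenes on an unmediated parent of $Y_j$ then $\Pi_{lj}\ne 0$, so no cancellation erases that footprint. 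I would then argue by peeling along a topological order --- at each stage identifying the primary variables whose reduced-form support is entirely ``explained'' by interventions that do not reach any not-yet-peeled variable, peeling them off, and recursing --- to conclude that $\an_{\mathcal G}(j)$ and $\pa_{\mathcal G}(j)$, hence the edge set $\mathcal E$, the candidate-IV sets $\ca_{\mathcal G}(k)$, and the mediator sets $\me_{\mathcal G}(k,j)$, are functionals of the law and so common to the two triples. This is the crux of the proof, and the step I expect to be the main obstacle: hidden confounding makes $\mathbb P(\bm Y\mid\bm X)$ non-Markov with respect to $\mathcal G$, so adjacencies cannot be read off conditional independences among the $Y$'s, and everything must be routed through the intervention/reduced-form structure together with the non-cancellation hypothesis (A2) and acyclicity.

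Finally, with the structure and the sets $\ca_{\mathcal G}(k)$ in hand, I would recover the weights. By (A3), $|\iv_{\mathcal G}(k)|>|\ca_{\mathcal G}(k)|/2\ge 0$, so each $Y_k$ has a valid IV; for a valid IV $X_l$ of $Y_k$, $\Pi_{lk}=\mathrm W_{lk}\ne 0$ and $\Pi_{lm}/\Pi_{lk}=[(\mathbf I-\mathbf U)^{-1}]_{km}$ for every $m$, so all valid IVs of $Y_k$ share the same ``total-effect'' row. Partitioning $\ca_{\mathcal G}(k)$ according to this ratio row, the cell containing the valid IVs has strictly more than half the elements by (A3), hence is the unique majority cell, which identifies the row $[(\mathbf I-\mathbf U)^{-1}]_{k\cdot}$; ranging over $k$ recovers $(\mathbf I-\mathbf U)^{-1}$ and thus $\mathbf U=\mathbf I-[(\mathbf I-\mathbf U)^{-1}]^{-1}$. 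Equivalently --- since a valid IV $X_l$ of $Y_k$ satisfies $\mathrm W_{lj}=0$ for $j\ne k$ and $\Cov(X_l,\varepsilon_j)=0$ --- one may extract each direct effect $\mathrm U_{kj}$ as the just-identified two-stage least squares coefficient $\Cov(X_l,\widetilde Y_j\mid\bm X_{-l})/\Cov(X_l,Y_k\mid\bm X_{-l})$, where $\widetilde Y_j$ is a working response obtained by subtracting from $Y_j$ the already-identified mediated contributions (available when $Y_j$ is processed along the topological order). In either route $\mathbf U=\mathbf U'$, after which $\mathbf W$ and $\bm\Sigma$ are pinned down by the reduction-step identities, completing the proof.
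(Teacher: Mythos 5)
Your overall architecture matches the paper's: reduce to the reduced-form pair $(\bm\Pi,\bm\Omega)$ with $\bm\Pi=\mathbf W(\mathbf I-\mathbf U)^{-1}$ (the paper's $\mathbf V$), recover the ancestral structure and the candidate-IV sets by peeling, then invoke the majority rule to pin down the effects, after which $\mathbf W=\bm\Pi(\mathbf I-\mathbf U)$ and $\bm\Sigma=(\mathbf I-\mathbf U^\top)\bm\Omega(\mathbf I-\mathbf U)$ follow. Your final step is sound, and in one variant arguably cleaner than the paper's: partitioning $\ca_{\mathcal G}(k)$ by the ratio rows $\Pi_{l,+}/\Pi_{lk}$ and taking the unique majority cell does identify $[(\mathbf I-\mathbf U)^{-1}]_{k,+}$, since the diagonal of $(\mathbf I-\mathbf U)^{-1}$ equals one by acyclicity and $\Pi_{lk}=\mathrm W_{lk}\neq 0$ for every candidate IV (descendants of $Y_k$ contribute nothing to column $k$); the paper instead routes this through the sequential working-response regressions \eqref{equation:iv-regression}--\eqref{equation:identification-u}.

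The gap sits exactly where you predicted it: the structural step. Your peeling criterion --- ``variables whose reduced-form support is entirely explained by interventions that do not reach any not-yet-peeled variable'' --- is circular as stated (it quantifies over the set being constructed) and is not the criterion that actually works. The paper's Proposition \ref{proposition:peeling} detects a leaf $Y_k$ by the existence of an intervention row of $\mathbf V$ whose support is exactly the singleton $\{k\}$; necessity and sufficiency use (A4) (every node has a valid IV, implied by (A3)) in one direction and a non-cancellation argument in the other. Proposition \ref{proposition:peeling2} then recovers $\mathcal E^{+}$ by testing $\mathrm V_{lj}\neq 0$ for all valid IVs $l$ of each leaf and taking the transitive closure of the unmediated-parent edges so found, with (A2) entering precisely to guarantee that these unmediated-parent footprints survive. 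None of this machinery appears in your sketch, and without it the sets $\ca_{\mathcal G}(k)$ on which your majority-cell argument operates are not yet functionals of the law. Two smaller points: peeling identifies $\mathcal E^{+}$ but not $\pa_{\mathcal G}(j)$ or $\mathcal E$ (those emerge only after $\mathbf U$ is identified), so that claim should be dropped from the structural stage; and $(\bm X,\bm Y)$ need not be jointly Gaussian, since the paper explicitly allows discrete $\bm X$, so $\bm\Omega$ should be taken as the residual covariance of the population linear regression of $\bm Y$ on $\bm X$ rather than a conditional covariance --- harmless, as that is what your identities actually use.
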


To our knowledge, Theorem \ref{theorem:identifiability} is a new result for Gaussian DAG with hidden confounding, establishing the identifiability of all parameters in \eqref{equation:model}. In fact, if the causal parameter $\mathbf U$ is identifiable, then so are parameters $\mathbf W,\bm\Sigma$.
Regarding the conditions, (A1) states that $\Cov(\bm X)$ has full rank, which is common in the IV literature \citep{kang2016instrumental,chen2018two}. Note that (A1) permits discrete IV variables such as SNPs in data analysis. 
(A2) requires the interventional effects through unmediated parents not to cancel out when an invalid IV has multiple targets. (A3) requires valid IVs to dominate invalid ones so that the causal effect can be identified in the presence of latent confounders. Such a condition has been used in the causal inference literature \citep{kang2016instrumental,windmeijer2019use}. As shown in Supplementary Materials Section 1, when (A3) fails, \eqref{equation:model} can be non-identifiable. By comparison, (A1)--(A2) together with (A4) are used for model identification in the absence of unmeasured confounding \citep{li2023inference}.
\begin{itemize}
    \item [(A4)] Each $Y_k$ is intervened by at least one valid IV.
\end{itemize}
Noting that (A4) is implied by (A3), treating hidden confounding demands stronger conditions in view of Theorem \ref{theorem:identifiability}.

\section{Causal discovery}\label{section:discovery}

This section proposes a novel IV method to learn a DAG with unmeasured confounders. 
First, we introduce the ancestral relation graph (ARG), which, together with the candidate IV sets in Section \ref{section:iv-identifiability}, constitutes a basis for the proposed method. 

\begin{definition}[Ancestral relation graph]\label{definition:ancestral-relation-graph}
For a DAG $\mathcal G = (\bm X,\bm Y; \mathcal E, \mathcal I)$, its ancestral relation graph is defined as $\mathcal G^{+} = (\bm X,\bm Y; \mathcal E^{+}, \mathcal I^{+})$, where 
\begin{equation*}
    \mathcal E^{+} = \Big\{ (k,j) : k\in\an_{\mathcal G}(j) \Big\}, \qquad
    \mathcal I^{+} = \Big\{ (l,j) : l \in \bigcup_{k\in\an_{\mathcal G}(j)\cup\{ j\}} \inter_{\mathcal G}(k) \Big\}.
\end{equation*}
\end{definition}

Here, $\mathcal G^{+}$ is a super-DAG of $\mathcal G$ in that $\mathcal E^{+}\supseteq \mathcal E$ is the set of ancestral relations, $\mathcal I^{+}\supseteq \mathcal I$ is a superset of interventional relations, and $\mathcal G^{+}$ is acyclic.
Note that $\mathcal E^{+}$ defines a partial order for the primary variables $\bm Y$ in that $Y_{k}\prec_{\mathcal G} Y_j$ whenever $(k,j)\in\mathcal E^{+}$.
Without confounding, $\mathbf U$ can be consistently estimated via direct regressions according to the known $\mathcal G^{+}$ \citep{shojaie2010penalized}, where $\mathcal G^{+}$ can be recovered by the peeling algorithm \citep{li2023inference}. However, this approach no longer applies in the presence of hidden confounders. 

To address this obstacle, Sections \ref{section:unknown-order-peeling}--\ref{section:unknown-order-peeling2} modify the peeling algorithm to construct the ARG $\mathcal G^{+}$ and the candidate IV sets $\{\ca_{\mathcal G}(k)\}_{1\leq k\leq p}$, and then Sections \ref{section:known-order-estimation}--\ref{section:finite-sample-estimation} develop a method to estimate $\mathbf U$ assuming the ARG and candidate IVs are known.

\subsection{Identification of \texorpdfstring{$\mathcal G^{+}$}{G+} and candidate IVs} \label{section:unknown-order-peeling}

In this subsection, we modify the peeling algorithm, originally designed for a model without unmeasured confounders \citep{li2023inference}, to uncover $\mathcal G^{+}$ and $\{\ca_{\mathcal G}(k)\}_{1\leq k\leq p}$ in the presence of hidden confounders, of which the results can be subsequently used as the inputs for identification of $\mathbf U$ in Section \ref{section:known-order-estimation}. 
The modified peeling algorithm essentially requires $p$ regressions to identify the ARG and candidate IVs, which is suited for large-scale causal discovery. Moreover, the produced ARG and candidate IV sets enjoy desirable statistical properties; see Section \ref{section:causal-discovery-theory}.

Let us begin with an observation that \eqref{equation:model} can be rewritten as 
\begin{equation}\label{equation:y-x-model}
\bm Y = \mathbf{V}^\top \bm X + (\mathbf{I}-\mathbf{U}^\top)^{-1} \bm \varepsilon,
\end{equation}
where $\mathbf{V} = \mathbf{W} (\mathbf{I} - \mathbf{U})^{-1}$ and $\mathrm{V}_{lj} = \sum_{k=1}^p \mathrm W_{lk} (\mathrm{I}_{kj} + \mathrm{U}_{kj} + \cdots + (\bs{U}^{p-1})_{kj})$.
Intuitively, $\mathrm V_{lj}\neq 0$ implies the dependence of $Y_j$ on $X_{l}$ through a directed path $X_l\to Y_{k} \to \cdots \to Y_j$, and hence that $X_l$ intervenes on $Y_j$ itself (when $k=j$) or its ancestor $Y_{k}$ (when $k\neq j$). 
In cases where $X_l$ intervenes exclusively on one primary variable, the following proposition provides insights into the connection between $\mathbf{V}$ and $\mathcal G^{+}$.

\begin{proposition} \label{proposition:peeling} 
   Suppose Assumptions (A1), (A2), and (A4) are satisfied. There exists at least one intervention variable $X_l$ such that $\mathrm V_{lk}\neq 0$ and $\mathrm V_{lk'}=0$ for $k'\neq k$ if and only if $Y_k$ is a leaf node (has no descendant). Moreover, such $X_l$ is a valid IV of $Y_k$ in $\mathcal G$.
\end{proposition}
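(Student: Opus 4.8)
The plan is to prove both directions of the equivalence, then verify the ``moreover'' claim, working throughout with the reduced-form matrix $\mathbf V = \mathbf W(\mathbf I - \mathbf U)^{-1}$ and the expansion $\mathrm V_{lj} = \sum_{k} \mathrm W_{lk}\big((\mathbf I - \mathbf U)^{-1}\big)_{kj}$. The key structural fact I would use repeatedly is that, because $\mathbf U$ encodes a DAG, $\big((\mathbf I - \mathbf U)^{-1}\big)_{kj}$ is a sum over directed paths from $Y_k$ to $Y_j$; in particular it is nonzero only if $k = j$ or $k \in \an_{\mathcal G}(j)$, and it equals $1$ when $k = j$. This converts statements about the support of $\mathbf V$ into statements about ancestry and the intervention sets $\inter_{\mathcal G}(\cdot)$.

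First I would prove the ``if'' direction: suppose $Y_k$ is a leaf node. By (A4), $Y_k$ is intervened by at least one valid IV $X_l$, so $\mathrm W_{lk} \neq 0$ and $\mathrm W_{lk'} = 0$ for all $k' \neq k$. I would compute $\mathrm V_{lj} = \mathrm W_{lk}\big((\mathbf I - \mathbf U)^{-1}\big)_{kj}$ for every $j$. For $j = k$ this equals $\mathrm W_{lk} \neq 0$. For $j \neq k$, since $Y_k$ is a leaf it has no descendants, so there is no directed path $Y_k \to \cdots \to Y_j$, hence $\big((\mathbf I - \mathbf U)^{-1}\big)_{kj} = 0$ and $\mathrm V_{lj} = 0$. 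Thus this $X_l$ witnesses the claimed property, and it is a valid IV of $Y_k$ by construction — this simultaneously establishes the ``moreover'' part on the leaf-implies-valid-IV side.

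Next, the ``only if'' direction: suppose some $X_l$ satisfies $\mathrm V_{lk} \neq 0$ and $\mathrm V_{lk'} = 0$ for all $k' \neq k$; I want to conclude $Y_k$ is a leaf. The subtlety here — and what I expect to be the main obstacle — is that $\mathrm V_{lk'} = 0$ does not immediately force $\mathrm W_{lk'} = 0$, because cancellations among multiple directed paths could occur; this is exactly where Assumption (A2) enters. The argument I would run: from $\mathrm V_{lk} \neq 0$, the reduced-form coefficient is nonzero, so $X_l$ influences $Y_k$, meaning $X_l$ intervenes on $Y_k$ or on some ancestor of $Y_k$; pick $Y_{k^\ast}$ with $k^\ast \in \an_{\mathcal G}(k) \cup \{k\}$ such that $X_l$ intervenes on $Y_{k^\ast}$ and $k^\ast$ is ``lowest'' in the order (no descendant of $Y_{k^\ast}$ that is also $Y_k$ or an ancestor of $Y_k$ is intervened by $X_l$ — or more carefully, choose $k^\ast$ so that $X_l$ does not intervene on any mediator strictly between $Y_{k^\ast}$ and $Y_k$). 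Now I claim $k^\ast = k$: if not, $Y_{k^\ast}$ is a strict ancestor of $Y_k$, and $X_l$ intervenes on an unmediated parent along the path; I need to produce some $Y_j$ with $j \neq k$ for which $\mathrm V_{lj} \neq 0$ to get a contradiction. The natural candidate is $Y_j$ being the relevant child/descendant, and (A2) — rephrased as $\Cov(Y_j, X_l \mid \bm X_{-l}) \neq 0$ whenever $X_l$ intervenes on an unmediated parent of $Y_j$, which one checks is equivalent to $\mathrm V_{lj} \neq 0$ via the reduced form and (A1) — guarantees the needed non-cancellation. Having forced $k^\ast = k$, so $X_l$ intervenes on $Y_k$ itself, I then argue $Y_k$ is a leaf: if $Y_k$ had a child $Y_j$, then $X_l$ intervenes on $Y_k$ which is an unmediated parent of $Y_j$ (taking $Y_j$ to be a child with no other directed path from $Y_k$ — such a child exists among the children of $Y_k$, e.g. a child that is minimal among children of $Y_k$ in the DAG order), so by (A2) we get $\mathrm V_{lj} \neq 0$ with $j \neq k$, contradicting the hypothesis. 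Hence $Y_k$ has no children, i.e. is a leaf.

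Finally, for the ``moreover'' in the ``only if'' case: having shown $X_l$ intervenes on $Y_k$, I must check $X_l$ intervenes on no other primary variable, i.e. $\mathrm W_{lk'} = 0$ for $k' \neq k$. If $X_l$ also intervened on some $Y_{k'}$, then (using that $Y_k$ is a leaf, so $k'$ is not a descendant of $Y_k$) $X_l$ would intervene on an unmediated parent of some variable $Y_j$ along a path out of $Y_{k'}$ that does not pass through $Y_k$; by (A2) this yields $\mathrm V_{lj} \neq 0$ for some $j \neq k$, again a contradiction — or, if $Y_{k'}$ is itself a leaf distinct from $Y_k$, then directly $\mathrm V_{lk'} = \mathrm W_{lk'}\cdot 1 \neq 0$ with $k' \neq k$, a contradiction. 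Either way $X_l$ satisfies (IV1)--(IV2) and is a valid IV of $Y_k$. I would package the repeated ``intervening on an unmediated parent forces a nonzero reduced-form coefficient'' step as a small internal claim (equivalence of (A2) with the corresponding statement about $\mathbf V$, which follows since $\Cov(\bm X)$ is invertible by (A1), so $\Cov(Y_j, X_l \mid \bm X_{-l})$ is a nonzero multiple of $\mathrm V_{lj}$), and invoke it wherever the cancellation issue arises; getting the bookkeeping of ``lowest intervened ancestor'' and ``exists a child with no alternate path'' exactly right is the part that needs care.
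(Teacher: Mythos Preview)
Your ``if'' direction is clean and correct, and your reformulation of (A2) as ``$\mathrm V_{lj}\neq 0$ whenever $X_l$ intervenes on an unmediated parent of $Y_j$'' (via (A1) and the reduced form) is exactly the right internal claim to isolate.

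The gap is in the ``only if'' direction, at the step where you try to force $k^\ast=k$. You pick $k^\ast$ to be the element of $A\cap(\an_{\mathcal G}(k)\cup\{k\})$ \emph{closest} to $k$ (your ``lowest''), and then appeal to (A2) applied to a topologically-first child of $Y_{k^\ast}$. But if that child happens to be $Y_k$ itself---which is perfectly possible when $Y_{k^\ast}$ is an unmediated parent of $Y_k$---(A2) only returns $\mathrm V_{lk}\neq 0$, which you already knew, and no contradiction emerges. More generally, choosing the element closest to $k$ gives you no control over contributions to $\mathrm V_{l k^\ast}$ from elements of $A$ that are ancestors of $k^\ast$, so you cannot read off $\mathrm V_{l k^\ast}\neq 0$ either.

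The fix is to reverse the choice: let $A=\{m:\mathrm W_{lm}\neq 0\}$ (nonempty since $\mathrm V_{lk}\neq 0$) and take $m_0$ \emph{minimal} in $A$ under the ancestry order (topologically earliest). Then no $m\in A$ is a strict ancestor of $m_0$, so
\[
\mathrm V_{l m_0}=\sum_{m\in A}\mathrm W_{lm}\big((\mathbf I-\mathbf U)^{-1}\big)_{m,m_0}=\mathrm W_{l m_0}\neq 0,
\]
forcing $m_0=k$. No appeal to (A2) is needed here. Since every minimal element of $A$ equals $k$, $k$ is the unique minimal element and hence the minimum of $A$: every intervention target of $X_l$ is $Y_k$ or a descendant of $Y_k$. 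Now your (A2) step works cleanly to show $Y_k$ is a leaf (take $Y_j$ the topologically-first child of $Y_k$, so $Y_k$ is an unmediated parent of $Y_j$ and $j\neq k$). The ``moreover'' then falls out immediately: $A\setminus\{k\}$ consists of descendants of a leaf, hence is empty, so $X_l$ satisfies (IV1)--(IV2). This also replaces the case-split in your final paragraph, which as written does not cover the situation where $Y_{k'}$ has ancestors in $A$.
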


Proposition \ref{proposition:peeling} suggests that the leaves and their valid IVs in $\mathcal G$ can be identified by 
\begin{equation}\label{equation:leaf}
\begin{split}
    \textsc{leaf}(\mathcal G) 
    & = \{k: \text{ for some $l$, } \mathrm V_{lk} \neq 0 \text{ and } \mathrm V_{lk'}=0 \text{ for all } k'\neq k \} \\
    & = \{ k :  k = \argmax_j |\mathrm V_{lj}| \text{ for some } l = \argmin_{\|\mathbf{V}_{l,+}\|_0 > 0} \|\mathbf{V}_{l,+}\|_0 \},\\
    \iv_{\mathcal G}(k) & = \{ l : \mathrm V_{lk}\neq 0 \text{ and } \mathrm V_{l k'} = 0 \text{ for all } k' \neq k \} \\
    & =  \{l:  l = \argmin_{\|\mathbf{V}_{l,+}\|_0 > 0} \|\mathbf{V}_{l,+}\|_0 \text{ and } k = \argmax_j |\mathrm V_{lj}| \}, \quad k\in\textsc{leaf}(\mathcal G).
\end{split}
\end{equation}
After the leaf nodes are learned, we can remove them to obtain a sub-DAG. If $X_l$ is a valid IV of a non-leaf $Y_k$ in $\mathcal{G}$, its validity for $Y_k$ is retained in the sub-DAG, implying (A4) continues to hold. Moreover, Assumptions (A1)--(A2) are naturally upheld in the sub-DAG. Hence, the requirements of Proposition \ref{proposition:peeling} are satisfied in the sub-DAG, whose leaf variables and their valid IVs can be learned in the same fashion. As a result, we can successively identify and remove (i.e., peel) the leaf nodes from the DAG and sub-DAGs. This yields a topological order of primary variables but does not recover $\mathcal G^{+}$.

Next, we investigate how $\mathbf V$ can be further used to recover $\mathcal G^+$ with $\{\ca_{\mathcal G}(k)\}_{1\leq k\leq p}$. 
Subsequently, we use $\mathcal G^- = (\bm X^-,\bm Y^-;\mathcal E^-,\mathcal I^-)$ to denote a generic sub-DAG produced by peeling, where $\bm Y^-$ are the primary variables in $\mathcal G^-$ and $\bm Y\setminus\bm Y^-$ are peeled ones, $\bm X^-$ are intervention variables on $\bm Y^-$, $\mathcal E^-$ is the set of causal relations among $\bm Y^-$, and $\mathcal I^-$ is the set of interventional relations between $\bm X^-$ and $\bm Y^-$. Then each variable in $\bm Y^-$ is a non-descendant of each in $\bm Y\setminus \bm Y^-$. Moreover, $\textsc{leaf}(\mathcal G^-)$ and $\{ \iv_{\mathcal G^-}(k) \}_{k\in \textsc{leaf}(\mathcal G^-)}$ are identified by \eqref{equation:leaf}.

\begin{proposition}\label{proposition:peeling2} 
    Suppose Assumptions (A1), (A2), and (A4) are satisfied. Let $Y_k$ be a leaf node in $\mathcal{G}^-$ and $Y_j$ be in $\bm{Y} \setminus \bm{Y}^-$. Then the following statements are true.
    \begin{enumerate}[(A)]
    \item If $\mathrm V_{lj} \neq 0$ for all $l \in \iv_{\mathcal{G}^-}(k)$,  we have $(k,j) \in \mathcal{E}^{+}$.
    \item If $Y_k$ is an unmediated parent of $Y_j$, then $\mathrm V_{lj} \neq 0$ for all $l \in \iv_{\mathcal{G}^-}(k)$.
    \end{enumerate}
\end{proposition}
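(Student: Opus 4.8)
The plan is to reduce both claims to statements about the coefficient matrix $\mathbf{V}=\mathbf{W}(\mathbf{I}-\mathbf{U})^{-1}$ appearing in \eqref{equation:y-x-model}. Two facts carry most of the argument. (i) Because $\Cov(\bm\varepsilon,\bm X)=\bm 0$, the population regression of $Y_j$ on $\bm X$ in \eqref{equation:y-x-model} has coefficient vector equal to the $j$-th column of $\mathbf{V}$; since (A1) makes the residual variance of $X_l$ given $\bm X_{\{1,\dots,q\}\setminus\{l\}}$ strictly positive, $\mathrm V_{lj}\neq 0$ if and only if $\Cov(Y_j,X_l\mid\bm X_{\{1,\dots,q\}\setminus\{l\}})\neq 0$. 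In particular, (A2) says precisely that $\mathrm V_{lj}\neq 0$ whenever $X_l$ intervenes on an unmediated parent of $Y_j$. (ii) Because $\mathcal G$ is acyclic, $(\mathbf U^m)_{kj}$ sums the weights of length-$m$ directed walks from $Y_k$ to $Y_j$, each of which is a path; hence if $k\neq j$ and $Y_k\notin\an_{\mathcal G}(j)$, then $[(\mathbf I-\mathbf U)^{-1}]_{kj}=\sum_{m}(\mathbf U^m)_{kj}=0$. I will also use the bookkeeping fact, noted before the proposition, that the peeled variables $\bm Y\setminus\bm Y^-$ are never ancestors of variables in $\bm Y^-$; in particular, since $Y_j$ is peeled and $Y_k\in\bm Y^-$ (being a leaf of $\mathcal G^-$), we have $k\neq j$.

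For part (A), the crucial point is that the hypothesis ranges over \emph{all} $l\in\iv_{\mathcal G^-}(k)$, and this set always contains a valid IV of $Y_k$ in the \emph{full} graph $\mathcal G$. Indeed, by (A4) there is some $l^*\in\iv_{\mathcal G}(k)$; since $X_{l^*}\to Y_k$ with $Y_k\in\bm Y^-$ we have $X_{l^*}\in\bm X^-$, and since $X_{l^*}$ intervenes on no primary variable other than $Y_k$, we get $l^*\in\iv_{\mathcal G^-}(k)$. Applying the hypothesis to $l^*$ gives $\mathrm V_{l^*j}\neq 0$; but $\mathrm W_{l^*k'}=0$ for every $k'\neq k$, so $\mathrm V_{l^*j}=\mathrm W_{l^*k}\,[(\mathbf I-\mathbf U)^{-1}]_{kj}$, and $\mathrm W_{l^*k}\neq 0$ forces $[(\mathbf I-\mathbf U)^{-1}]_{kj}\neq 0$. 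By fact (ii) and $k\neq j$, there is a directed path $Y_k\to\cdots\to Y_j$, i.e.\ $Y_k\in\an_{\mathcal G}(j)$ and $(k,j)\in\mathcal E^{+}$.

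For part (B), let $l\in\iv_{\mathcal G^-}(k)$, so $X_l\to Y_k$ in $\mathcal G^-$ and hence in $\mathcal G$ (the intervention edges of $\mathcal G^-$ are those of $\mathcal G$ restricted to $\bm X^-\times\bm Y^-$). Since $Y_k$ is an unmediated parent of $Y_j$, the variable $X_l$ intervenes on an unmediated parent of $Y_j$, and (A2) together with fact (i) yields $\mathrm V_{lj}\neq 0$.

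The routine parts are the walk-counting identity (ii) and verifying that peeling preserves the relevant intervention edges and leaf structure. The one delicate point — and the only place (A4) enters — is in part (A): a valid IV of $Y_k$ merely in the sub-DAG $\mathcal G^-$ could also intervene on peeled descendants of $Y_k$, contributing terms to $\mathrm V_{lj}$ that need not cancel, which is exactly why one must instead pick an $l^*$ valid in the full graph so that $\mathrm V_{l^*j}$ collapses to the single term $\mathrm W_{l^*k}[(\mathbf I-\mathbf U)^{-1}]_{kj}$. It is also worth checking explicitly that $k\neq j$, so that $[(\mathbf I-\mathbf U)^{-1}]_{kj}\neq 0$ certifies a genuine ancestral path rather than the trivial diagonal entry.
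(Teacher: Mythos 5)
Your proof is correct and follows essentially the same route as the paper's: reduce both claims to the reduced-form matrix $\mathbf V=\mathbf W(\mathbf I-\mathbf U)^{-1}$, use (A4) to extract a valid IV $l^*\in\iv_{\mathcal G}(k)\subseteq\iv_{\mathcal G^-}(k)$ so that $\mathrm V_{l^*j}=\mathrm W_{l^*k}[(\mathbf I-\mathbf U)^{-1}]_{kj}$ certifies the ancestral path in part (A), and invoke (A2) via the partial-covariance characterization of $\mathrm V_{lj}$ for part (B). Your remark on why a sub-DAG-valid IV does not suffice for (A) correctly identifies the one delicate point.
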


Proposition \ref{proposition:peeling2} outlines a method for identifying edges in $\mathcal G^+$ from the leaf variables of $\mathcal{G}^-$ to the peeled variables $\bm Y\setminus \bm Y^-$ by 
\begin{equation}\label{equation:unmediate-parent-edge}
    \{(k,j): Y_k \in \textsc{leaf}(\mathcal G^-), \  Y_j \in \bm Y\setminus \bm Y^- \text{ and } \mathrm V_{lj} \neq 0 \text{ for all } l \in \iv_{\mathcal{G}^-}(k) \}.
\end{equation}%
Specifically, (A) shows that any identified edge must be present in $\mathcal G^{+}$, so no extra edges are identified. Meanwhile, (B) shows that every directed edge from an unmediated parent must be correctly discovered. Importantly, the collection of all such edges suffices to recover all ancestral relationships, which guarantees that no edge in $\mathcal E^{+}$ is overlooked. Upon the identification of $\mathcal G^+$, the candidate IV sets can be learned by
\begin{equation}\label{equation:candidate-iv}
   \ca_\mathcal{G}(k) = \{l: (l,k) \in {\mathcal{I}}^{+} \text{ and } (l,j) \in {\mathcal{I}}^{+}, k \neq j \text{ only if } (k,j) \in {\mathcal{E}}^{+}\}, \quad 1\leq k\leq p.
\end{equation}
Consequently, Propositions \ref{proposition:peeling}--\ref{proposition:peeling2} enable the recovery of $\mathcal G^{+}$ and $\{\ca_{\mathcal G}(k)\}_{1\leq k\leq p}$.

\subsection{Finite-sample estimation of \texorpdfstring{$\mathcal G^{+}$}{G+} and candidate IVs}
 \label{section:unknown-order-peeling2}

This subsection implements the modified peeling algorithm delineated in Section \ref{section:unknown-order-peeling} to estimate $\mathcal G^+$ and $\{\ca_{\mathcal G}(k)\}_{1\leq k\leq p}$.
To proceed, suppose data matrices $\mathbf{Y}_{p\times n} = (\mathbf{Y}_{+,1},\ldots,\mathbf{Y}_{+,n})$ and $\mathbf{X}_{q\times n} = (\mathbf{X}_{+,1},\ldots,\mathbf{X}_{+,n})$ are given, where $(\mathbf{Y}_{+,i},\mathbf{X}_{+,i})_{i=1}^n$ are sampled from \eqref{equation:model} independently.
We estimate $\mathbf V$ by $\widehat{\mathbf V} = (\widehat{\mathbf V}_{+,1},\ldots,\widehat{\mathbf V}_{+,p})$ with sparse regressions
\begin{equation}\label{equation:estimate-v}
    \widehat{\mathbf V}_{+,j} = \argmin_{\bm\beta} \ \sum_{i=1}^n ( \mathrm{Y}_{j,i} - \bm\beta^\top \mathbf X_{+,i} )^2 \quad \text{s.t.} \quad \|\bm\beta\|_0 \leq \kappa'_j
\end{equation}
where $1\leq\kappa'_j\leq q$ is tuned by BIC for $1\leq j \leq p$\;
Moreover, the truncated Lasso penalty (TLP) \citep{shen2012likelihood} is used as the computational surrogate for $\|\cdot\|_0$, where TLP is defined as 
$\operatorname{TLP}_{\tau}(\bm\beta) = \sum_{j=1}^{r} \min(|\beta_j|/\tau,1)$ for $\bm\beta=(\beta_1,\ldots,\beta_r)$, and $\tau > 0$ is a hyperparameter in TLP; see Supplementary Materials Section 2 for details.
The modified peeling algorithm based on Section \ref{section:unknown-order-peeling} is summarized in Algorithm \ref{algorithm:peeling}.\footnote{{In Algorithm \ref{algorithm:peeling} Step 7, the indices of $\mathbf{V}$ are kept so that $\mathrm V_{lj}$ always represents the effect from $X_l$ to $Y_j$.}}

\begin{algorithm}[ht]
  \caption{Estimation of $\mathcal G^+$ and $\{\ca_{\mathcal G}(k)\}_{1\leq k\leq p}$} \label{algorithm:peeling}
  \KwIn{Data $\mathbf Y_{p\times n}$ and $\mathbf X_{q\times n}$;}

  Compute $\widehat{\mathbf V}$ via \eqref{equation:estimate-v}\;

  Initialize $\mathbf V\leftarrow \widehat{\mathbf V}$, $\widehat{\mathcal E}^+ \leftarrow \emptyset$, $\widehat{\mathcal I}^+\leftarrow \{(l,k): \widehat{\mathrm V}_{lk}\neq 0 \}$\; 
  
  Initialize $\mathcal G^-$ by $\bm Y^-\leftarrow \bm Y$, $\bm X^-\leftarrow\bm X$, $\mathcal E^-\leftarrow \widehat{\mathcal E}^+$, $\mathcal I^-\leftarrow \widehat{\mathcal I}^+$\;

  \While{${\bm Y}^-$ is not empty}{
    
    {Update $\textsc{leaf}(\mathcal G^-)$ and $\{\iv_{\mathcal G^-}(k)\}_{k\in\textsc{leaf}(\mathcal G^-)}$ via \eqref{equation:leaf}\;

    }

    Update $\widehat{\mathcal{E}}^{+}$ by adding \eqref{equation:unmediate-parent-edge}\;
    
    Update $\mathcal{G}^-$ by removing $\textsc{leaf}(\mathcal G^-)$ and ${\mathbf V}$ by keeping the columns in $\bm Y^-$\;  
  }
  
  Update $\widehat{\mathcal{E}}^{+}\leftarrow \{(k,j): Y_k \to \cdots \to Y_j \text{ in }  \widehat{\mathcal{E}}^{+} \}$\;
  Update $\widehat{\mathcal{I}}^{+}\leftarrow \{(l,j): (l,k) \in \widehat{\mathcal{I}}^{+} \text{ and } (k,j) \in \widehat{\mathcal{E}}^{+} \}$\;
  Update $\widehat{\ca}_\mathcal{G}(k)$ by \eqref{equation:candidate-iv}\; 

  \KwRet{$\widehat{\mathcal E}^+$, $\widehat{\mathcal I}^+$, and $\{\widehat{\ca}_{\mathcal G}(k)\}_{1\leq k\leq p}$}\;
\end{algorithm}

\subsection{Identification of \texorpdfstring{$\mathbf U$}{U}}\label{section:known-order-estimation}

In this subsection, we present a new method for identifying causal effects $\mathbf U$, using the ARG $\mathcal G^{+}$ and candidate IV sets $\{\ca_{\mathcal G}(k)\}_{1\leq k\leq p}$ as inputs. 
{Note that $\{\an_{\mathcal G}(k)\}_{1\leq k\leq p}$ and $\{\me_{\mathcal G}(k,j)\}_{(k,j)\in \mathcal E^+}$} can be derived from $\mathcal G^+$. 
Throughout this subsection, the subscript $\mathcal G$ is dropped for brevity and $\bm\alpha, \bm\beta, \bm\gamma$ denote nuisance parameters in regression.
Moreover, we assume that $\bm\varepsilon$ and $\bm X$ are independent to simplify the derivation; see Lemmas \ref{lemma:linear}--\ref{lemma:linear2} in the Appendix for the case with $\bm\varepsilon$ and $\bm X$ being uncorrelated.

\begin{figure}
    \centering
    \includegraphics[width=.8\textwidth]{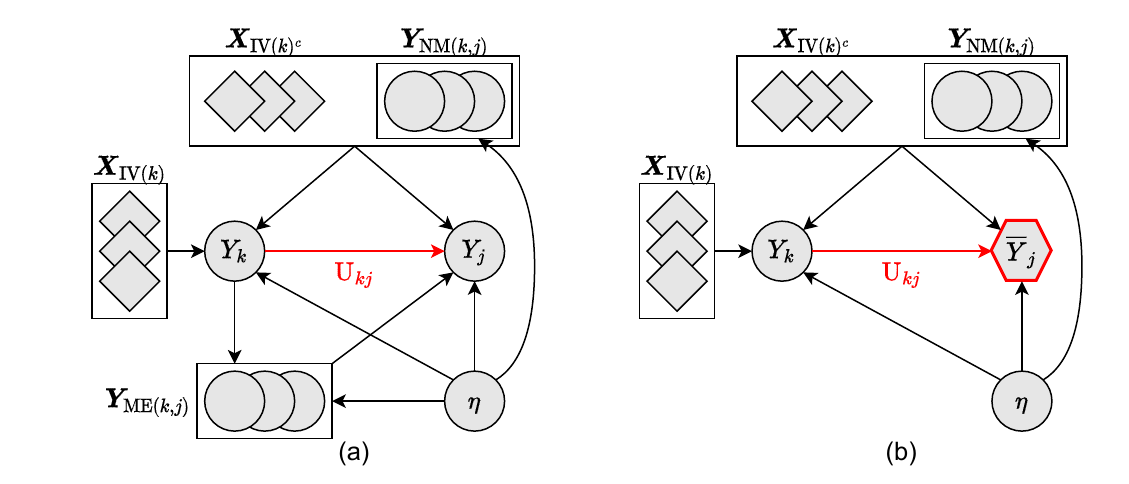}
    \caption{Estimation of causal parameter $\mathrm U_{kj}$. (a) Display of the relations among relevant variables. (b) Display of working response regression.}
    \label{fig:method}
\end{figure}

\paragraph{The case with all IVs being valid.}

We begin with a special case of \eqref{equation:model} where all IVs are valid, that is, $\ca(k) = \iv(k)$; $k=1,\ldots,p$.

{To estimate $\mathbf U$, note that $\mathbf U$ is supported on $\mathcal E^+$, namely $\mathbf U=(\mathbf U_{\mathcal E^+},\bm 0)$. Here, we consider estimating $\mathrm{U}_{kj}$, as well as selecting nonzero $\mathrm U_{kj}$ for graph recovery, for each $(k,j)\in\mathcal E^{+}$, as described in Figure \ref{fig:method} (a).}

{To pinpoint the difficulties and motivate our approach, we make the following observations.} 
First, regression of $Y_j$ on $Y_k$ together with covariates $(\bm Y_{\an(j)\setminus \{k\} },\bm X)$ can bias the estimation due to confounder $\eta$. Second, in hope of treating confounders one might replace $Y_k$ with its surrogate $\mathbb E(Y_k \mid \bm Y_{\an(k)}, \bm X)$ to regress $Y_j$ on $\mathbb E(Y_k \mid \bm Y_{\an(k)}, \bm X)$ with $(\bm Y_{\an(j)\setminus \{k\}}, \bm X_{\iv(k)^c})$ being covariates. 
However, this is also problematic. 
For explanation, note that $\an(j)\setminus \{k\}$ can be partitioned into mediators $\me(k,j)$ and non-mediators 
\begin{equation*}
    \nui(k,j) = \an(j)\setminus (\me(k,j)\cup \{k\}).
\end{equation*}
In Figure \ref{fig:method} (a), $\bm X_{\iv(k)}$ can be associated with $\eta$ given $\bm Y_{\an(j)\setminus \{k\} } = (\bm Y_{\me(k,j)},\bm Y_{\nui(k,j)})$, violating the unconfoundedness of IVs (Remark \ref{remark:iv-conditions}) and causing an estimation bias.
This is because the mediators $\bm Y_{\me(k,j)}$ generate additional associations after conditioning on them; see the Appendix for technical discussion using the concept of d-separation \citep{pearl2009causality}.

Now, we propose a new method, which eliminates the impact of mediators $\bm Y_{\me(k,j)}$ by introducing the working response $\overline{Y}_j = Y_j - {\mathbf{U}}_{\me(k,j),j}^\top \bm{Y}_{\me(k,j)}$, as depicted in Figure \ref{fig:method} (b). Of note, the definition of $\overline{Y}_j$ depends on $(k,j)$, which is dropped for simplicity. 
{As in  \citet{angrist1996identification}, we have} 
\begin{equation}\label{equation:iv-regression}
    \begin{split}
        & \E \Big(\overline{Y}_j
        \mid \bm Y_{\nui(k,j)}, \bm X\Big) \\
        \overset{\text{(i)}}{=}\ & \mathrm U_{kj} \E\Big(Y_k \mid \bm Y_{\nui(k,j)}, \bm X\Big)
         + \sum_{k'\in\nui(k,j)} \mathrm U_{k'j} Y_{k'} + 
         \sum_{l\notin \iv(k)}
         \mathrm W_{lj} X_{l} + \E \Big(\varepsilon_j \mid \bm Y_{\nui(k,j)}, \bm X \Big) \\
        \overset{\text{(ii)}}{=}\ & \mathrm U_{kj}\widetilde{Y}_k + \bm \gamma^\top \bm Z,
    \end{split}
\end{equation}
where $\widetilde{Y}_k = \mathbb E(Y_k \mid \bm Y_{\nui(k,j)}, \bm X)$, $\bm Z = (\bm Y_{\nui(k,j)}, \bm X_{\ca(k)^c}) = (\bm Y_{\nui(k,j)}, \bm X_{\iv(k)^c})$, equality (i) follows from \eqref{equation:model}, and equality (ii) holds because $\mathbb E \left(\varepsilon_j \mid \bm Y_{\nui(k,j)}, \bm X \right)$ is a linear combination of $(\bm Y_{\nui(k,j)}, \bm X_{\iv(k)^c})$ by Lemma \ref{lemma:linear} in Appendix. Observe that $\widetilde{Y}_k$ depends on $\bm X_{\iv(k)}$ while $\bm Z$ does not.
As a result, the $\mathrm U_{kj}$ is identified through the working response regression.

This approach requires the knowledge of $\mathbf{U}_{\me(k,j),j}$ prior to identifying $\mathrm U_{kj}$. Given $\mathcal G^{+}$, we develop a sequential procedure to learn $\bs U$. First, we identify $\mathrm{U}_{kj}$ for each pair $(k,j)$ such that the longest path in $\mathcal G^{+}$ between $k$ and $j$ is equal to $d=1$. Then for $(k,j)$ such that the longest path in $\mathcal G^{+}$ between $k$ and $j$ is $d=2$, the effects of mediators $\mathbf U_{\me(k,j),j}$ are available. Thus, we can identify $\mathrm{U}_{kj}$ in \eqref{equation:iv-regression}. Proceed similarly for $d = 3,4,5,\ldots$ until all pairs in $\mathcal E^{+}$ have been identified.

\paragraph{The case with invalid IVs.}

In general, $\ca(k)\supseteq \iv(k)$ because of invalid IVs, where $\ca(k)$ is known but $\iv(k)$ is unknown.
{Similar to \citet{kang2016instrumental}, we have} 
\begin{equation}\label{equation:invalid-iv-regression}
    \begin{split}
        &\E \Big(\overline{Y}_j
        \mid  \bm Y_{\nui(k,j)}, \bm X \Big) \\
        {=}\ & \mathrm U_{kj} \E\Big(Y_k \mid \bm Y_{\nui(k,j)}, \bm X\Big)
         + 
         \sum_{k'\in\nui(k,j)} 
         \mathrm U_{k'j} Y_{k'} + 
         \sum_{l\notin \iv(k)}
         \mathrm W_{lj} X_{l} + \E \Big(\varepsilon_j \mid \bm Y_{\nui(k,j)}, \bm X \Big) \\
        \overset{\text{(iii)}}{=}\ & \mathrm U_{kj}\widetilde{Y}_k + \bm \gamma^\top \bm Z + 
        \sum_{l\in\ca(k)\setminus \iv(k)}
        \beta_l X_{l},
    \end{split}
\end{equation}
where $\widetilde{Y}_k = \mathbb E(Y_k \mid \bm Y_{\nui(k,j)}, \bm X)$, $\bm Z = (\bm Y_{\nui(k,j)}, \bm X_{\ca(k)^c})$, equality (iii) holds by Lemma \ref{lemma:linear} in Appendix, and $\beta_l =\mathrm W_{lj} \neq 0$ indicates $X_l$ is an invalid IV for $Y_k$. However, since $\iv(k)$ has not been identified and $\widetilde{Y}_k$ depends on $\bm X_{\ca(k)}$, the representation of (iii) may not be unique. When the majority rule (A3) is satisfied by the DAG, the term (iii) admits the unique expression as in \eqref{equation:invalid-iv-regression}, providing the identification of $\mathrm U_{kj}$.
This leads to a sparse regression for an infinite sample
\begin{equation}\label{equation:identification-u}
    \min_{\mathrm{U}_{kj},\bm\beta,\bm\gamma} \ \mathbb E\left(\overline{Y}_j - \mathrm{U}_{kj} \widetilde{Y}_k - \bm\gamma^\top\bm Z - \bm\beta^\top \bm X_{\ca(k)} \right)^2 \quad \text{s.t.} \quad \|\bm\beta\|_0\leq \kappa,
\end{equation}
where $0\leq\kappa <|\ca(k)|/2$ is an integer-valued hyperparameter controlling the sparsity of $\bm\beta$.

\subsection{Finite-sample estimation of U} \label{section:finite-sample-estimation}

Suppose $(\mathbf Y_{p\times n},\mathbf X_{q\times n})$ are given. To estimate $\mathrm{U}_{kj}$, noting that $\widetilde{Y}_k$ is linear in $(\bm Y_{\nui(k,j)},\bm X)$ by Lemma \ref{lemma:linear}, we estimate $\widetilde{\mathrm{Y}}_{k,i}$ by $\widehat{\mathrm{Y}}_{k,i} = \widehat{\bm \alpha}_1^\top \bs X_{+,i} + \widehat{\bm \alpha}_2^\top \bs Y_{\nui(k),i}$, where $(\widehat{\bm \alpha}_1,\widehat{\bm \alpha}_2)$ solves
\begin{equation}\label{equation:imputation}
    \begin{split}
        \min_{\bs \alpha_1,\bs \alpha_2} \ \sum_{i=1}^n \Big( \mathrm{Y}_{k,i} - \bm \alpha_1^\top \bs X_{+,i} + \bm\alpha_2^\top \bs Y_{\nui(k),i} \Big)^2 \quad 
        \text{s.t.} \quad 
        \|\bs\alpha_1\|_0 + \|\bs \alpha_2\|_0 \leq \nu_1,
    \end{split}
\end{equation}
with $\nu_1$ being a tuning parameter.
Let the final estimate $\widehat{\mathrm{U}}_{kj}$ with $(\widehat{\bm\beta},\widehat{\bm\gamma})$ be the solution to the working response regression (provided that $\widehat{\bs U}_{\me(k,j),j}$ are available)
\begin{equation}\label{equation:working-response-regression}
    \begin{split}
        \min_{\mathrm U_{kj},\bm\beta,\bm\gamma} \ 
        \sum_{i=1}^n \Big( \Big( \mathrm{Y}_{j,i} - \widehat{\bs U}^\top_{\me(k,j),j} \bs Y_{\me(k,j),i}\Big) - \mathrm U_{kj} \widehat{\mathrm{Y}}_{k,i} - \bm\beta^\top\bm X_{\ca(k),i} - \bm\gamma^\top\bm Z_i \Big)^2 \\
        \text{s.t.} \quad \rho(\mathrm U_{kj}) + \|\bm\beta\|_0\leq \kappa, \quad  \|\bm\gamma\|_0\leq \nu_2,
    \end{split}
\end{equation}
where $0\leq\kappa \leq |\ca(k)|/2$ and $0\leq \nu_2\leq |\nui(k,j)|+|\ca(k)^c|$ are tuning parameters. {Depending on the purpose, $\rho(\cdot) = \I(\cdot\neq 0)$ for graph recovery and $\rho(\cdot) = 0$ for effect estimation without selection.}
In \eqref{equation:imputation}--\eqref{equation:working-response-regression}, 
$\nu_1,\nu_2$ are added to treat possible high-dimensional situations and the hyperparameters are tuned by BIC.
Algorithm \ref{algorithm:pruning} summarizes the procedure. 

\begin{algorithm}[ht]
  \caption{Estimation of $\mathbf U$} \label{algorithm:pruning}
  \KwIn{Data $\mathbf Y_{p\times n}$ and $\mathbf X_{q\times n}$, ARG $\mathcal G^+$ and candidate IV sets $\{\ca_{\mathcal G}(k)\}_{1\leq k\leq p}$; }

  Initialize $\widehat{\mathbf U}\leftarrow \bm 0$ and $d\leftarrow 1$\;
  \While{$d \leq $ the length of the longest directed path in $\mathcal G^+$}{
        For $(k,j)\in \mathcal E^+$ so that the length of the longest directed path from $Y_k$ to $Y_j$ is $d$, 
        estimate $\widehat{\mathrm U}_{kj}$ with \eqref{equation:imputation}--\eqref{equation:working-response-regression}\;
        Update $d\leftarrow d+1$\;
  }
  \KwRet{$\widehat{\mathbf U}$}\;
\end{algorithm}

\section{Likelihood inference}\label{section:inference}

This section develops a likelihood ratio test for the presence of multiple directed edges. 
Let $\mathcal{H} \subseteq\{(k, j): k \neq j, \ 1\leq k, j\leq p\}$ be a hypothesized edge set for primary variables $\bm Y$, where $(k,j) \in \mathcal{H}$ specifies a (hypothesized) directed edge $Y_k \rightarrow Y_j$ in \eqref{equation:model}. 
Now consider simultaneous testing of directed edges,
\begin{equation}\label{equation:test}
H_{0}: \mathrm{U}_{k j}=0 \text { for all }(k, j) \in \mathcal{H} \quad \text { versus } \quad H_{a}: \mathrm{U}_{k j} \neq 0 \text { for some }(k, j) \in \mathcal{H}.
\end{equation}
The null hypothesis $H_0$ asserts that all hypothesized edges in $\mathcal H$ are absent in the true graph $\mathcal G$. Rejecting $H_0$ indicates that at least one hypothesized edge in $\mathcal H$ presents in $\mathcal G$.

\paragraph*{The likelihood ratio.}

Given $\mathcal G^{+} = (\bm X,\bm Y; \mathcal E^{+},\mathcal I^{+})$, let $\bm\theta(\mathcal G^{+}) = (\mathbf U,\mathbf W)$ encode the coefficient parameters in $\mathcal G^{+}$, where $\mathbf U = (\mathbf U_{\mathcal E^{+}},\bm 0)$ and $\mathbf W = (\mathbf W_{\mathcal I^{+}},\bm 0)$. 
As such, the adjacency matrix $\mathbf U$ automatically meets the acyclicity constraint.
Given a random sample ${(\mathbf{Y}_{+,i}, \mathbf{X}_{+,i})}_{i=1}^{n}$, the log-likelihood is written as (up to an additive constant)
\begin{equation}\label{equation:log-likelihood}
L(\bm{\theta}(\mathcal G^{+}), \bm \Omega)=-\frac{1}{2} \sum_{i=1}^{n}\left\| \bm{\Omega}^{1/2} \left(\left(\mathbf{I}-\mathbf{U}^{\top}\right) \mathbf{Y}_{+,i}-\mathbf{W}^{\top} \mathbf{X}_{+,i}\right) \right\|_2^2 + \frac{n}{2} \log \det(\bm{\Omega}),
\end{equation}
where $\bm\Omega = \bm\Sigma^{-1}$ is the inverse of $\bm\Sigma$ in \eqref{equation:model}.
Then the maximum likelihood estimation (MLE) of \eqref{equation:model} can be written as 
\begin{equation}\label{equation:MLE}
    \max_{(\mathcal G^{+},\bm\Omega)} \max_{\bm\theta(\mathcal G^{+})} L(\bm{\theta}(\mathcal G^{+}), \bm \Omega).
\end{equation}
In view of \eqref{equation:MLE}, to obtain a likelihood ratio statistic for \eqref{equation:test} we need to compute the following quantities: (1) a consistent estimate $\widehat{\mathcal G}^{+}$ of $\mathcal G^{+}$, (2) a consistent estimate $\widehat{\bm\Omega}$ of $\bm\Omega$, and (3) two estimates, $\widehat{\bm\theta}^{(0)}$ and $\widehat{\bm\theta}^{(1)}$, of $\bm\theta(\mathcal G^{+})$ under $H_0$ and $H_a$, respectively.
This leads to the likelihood ratio defined as 
\begin{equation}\label{equation:likelihood-ratio}
L (\widehat{\bm{\theta}}^{(1)}, \widehat{\bm{\Omega}})- L (\widehat{\boldsymbol{\theta}}^{(0)}, \widehat{\bm{\Omega}}),
\end{equation}
where $\mathcal G^{+}$ is estimated by Algorithm \ref{algorithm:peeling} and ${\bm\Omega}$ is estimated from the residuals after fitting model \eqref{equation:model} via Algorithm \ref{algorithm:pruning}.

\paragraph*{Inference subject to acyclicity.}

In classical models, a likelihood ratio of form \eqref{equation:likelihood-ratio} has a nondegenerate and tractable limiting distribution, typically a chi-squared distribution with degrees of freedom $|\mathcal H|$. However, the likelihood ratio for \eqref{equation:test} may behave differently from classical ones since \eqref{equation:likelihood-ratio} may be degenerate or intractable, as to be explained.

First, note that the maximum likelihood subject to a wrong ARG $\widetilde{\mathcal G}^{+}\not\supseteq \mathcal G$ tends to be smaller than that subject to the correct $\mathcal G^{+}$, 
that is, 
\begin{equation*}
    \max_{\widetilde{\mathcal G}^{+}\not\supseteq \mathcal G}\max_{\bm\theta(\widetilde{\mathcal G}^{+}),\bm\Omega} L(\bm\theta(\widetilde{\mathcal G}^{+}),\bm\Omega) < \max_{\bm\theta({\mathcal G}^{+}),\bm\Omega} L(\bm\theta(\mathcal G^{+}),\bm\Omega),
\end{equation*}
as $n\to\infty$ under some regularity conditions for consistency.
Thus, we assume $\widehat{\mathcal G}^{+}=\mathcal G^{+}$ in this paragraph.
Then $\widehat{\bm\theta}^{(0)}$ is the MLE subject to ${\mathcal G}^{+}$ and $\mathbf U_{\mathcal H} = \bm 0$, which is equal to the MLE subject to the graph $\mathcal G^{+}_0 = (\bm X,\bm Y; \mathcal E^{+}\setminus \mathcal H, \mathcal I^{+})$.
Meanwhile, to test whether any edge in $\mathcal H$ exists, $\widehat{\bm\theta}^{(1)}$ is the MLE subject to an augmented graph $\mathcal G^{+}_1 = (\bm X, \bm Y; \mathcal E^{+}\cup\mathcal H,\mathcal I^{+})$ with hypothesized edges being added, namely, $\widehat{\mathbf U}^{(1)} = (\widehat{\mathbf U}^{(1)}_{\mathcal E^{+}\cup\mathcal H},\bm 0)$ and $\widehat{\mathbf W}^{(1)} = (\widehat{\mathbf W}^{(1)}_{\mathcal I^{+}},\bm 0)$.
Of note, since $\mathcal H$ is pre-specified by the user, $\mathcal G^{+}_1$ is not necessarily acyclic, and thus, not all edges in $\mathcal H$ could present in $\widehat{\mathbf U}^{(1)}$.
Furthermore, if a hypothesized edge $(k,j)$ is present in $\widehat{\mathbf U}^{(1)}$, then $\{(k,j)\}\cup\mathcal E^{+}$ must have no directed cycle and \eqref{equation:likelihood-ratio} is strictly positive (nondegenerate).
However, even if \eqref{equation:likelihood-ratio} does not degenerate to zero, its limiting distribution can be complicated when there exist multiple ways of augmenting $\mathcal G^{+}$ with the edges in $\mathcal H$ while maintaining the resulting graph as a DAG.
Therefore, a regularity condition for $\mathcal H$ is necessary to rule out intractable situations. 

On the ground of the foregoing discussion, we introduce the concepts of nondegeneracy and regularity to characterize the behavior of \eqref{equation:likelihood-ratio} as in \citet{li2023inference}.

\begin{definition}[Nondegeneracy and regularity with respect to $\mathcal G^+$] \label{def:testability} \ 
\begin{enumerate}[(A)]
  \item An edge $(k,j) \in \mathcal{H}$ is said to be nondegenerate with respect to an ancestral graph $\mathcal{G}^{+} = (\bm Y, \bm X; \mathcal{E}^{+}, \mathcal{I}^{+})$ if $\{(k,j)\} \cup \mathcal{E}^{+}$ contains no directed cycle. Otherwise, $(k,j)$ is said to be degenerate. Let $\mathcal{D} \subseteq \mathcal{H}$ be the set of all nondegenerate edges with respect to $\mathcal{G}^{+}$. A null hypothesis $H_0$ is said to be nondegenerate with respect to $\mathcal{G}^{+}$ if $\mathcal{D} \neq \emptyset$. Otherwise, $H_0$ is said to be degenerate.
  \item A null hypothesis $H_0$ is said to be regular with respect to $\mathcal G^{+}$ if $\mathcal{D} \cup \mathcal{E}^{+}$ contains no directed cycle. Otherwise, $H_0$ is called irregular.
\end{enumerate}
\end{definition}

Suppose $H_0$ is nondegenerate and regular. Then $\widehat{\bm\theta}^{(0)}$ is the MLE subject to the graph $\mathcal G^{+}_0 = (\bm X,\bm Y; \mathcal E^{+}\setminus \mathcal D, \mathcal I^{+})$ and $\widehat{\bm\theta}^{(1)}$ is the MLE subject to the graph  $\mathcal G^{+}_1 = (\bm X, \bm Y; \mathcal E^{+}\cup\mathcal D,\mathcal I^{+})$.

Now, we investigate the limiting distribution of \eqref{equation:likelihood-ratio} and derive an asymptotic test based on it. 
To this end, define the statistic
\begin{equation}\label{equation:test-statistic}
    T(\mathcal D) = \begin{cases}
        2 \left( L (\widehat{\bm{\theta}}^{(1)}, \widehat{\bm{\Omega}})- L (\widehat{\boldsymbol{\theta}}^{(0)}, \widehat{\bm{\Omega}}) \right) & \text{if } |\mathcal D| \text{ is fixed}, \\
        \left (2 \left( L (\widehat{\bm{\theta}}^{(1)}, \widehat{\bm{\Omega}})- L (\widehat{\boldsymbol{\theta}}^{(0)}, \widehat{\bm{\Omega}}) \right) - |\mathcal D|\right)/ \sqrt{2|\mathcal{D}|}  & \text{if } |\mathcal D| \to\infty.
    \end{cases}
\end{equation}

\begin{theorem}[Limiting distribution] 
\label{theorem:asymptotic}
Assume the null hypothesis $H_0$ is nondegenerate and regular. 
Suppose $\mathbb P(\widehat{\mathcal G}^{+} = {\mathcal G}^{+}) \to 1$ as $n\to\infty$. 
Then we have $\mathbb P(\widehat{\mathcal D}=\mathcal D)\to 1$.
{In addition, if $\|\widehat{\bm\Omega} - \bm\Omega\|_2^2 = O_{\mathbb P}({|S|\log(p\vee n)/n})$ where $S=\{(k,j) : \Omega_{kj}\neq 0 \}$,}
then under $H_0$,   
\begin{equation*}
    \begin{split}
        T(\widehat{\mathcal D}) \dto 
        \begin{cases}
        \chi^2_{|\mathcal D|}, & \text{if } |\mathcal D| \text{ is fixed and } {|S| \log (p \vee n)}/{n} \rightarrow 0, \\
        N(0,1), & \text{if } |\mathcal D| \to\infty \text{ and } { |{\mathcal{D}}|} |S| \log (p \vee n)/{n} \rightarrow 0.
        \end{cases}
    \end{split}
\end{equation*}
\end{theorem}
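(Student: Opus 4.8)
The plan is to reduce the likelihood ratio \eqref{equation:likelihood-ratio} to an exact quadratic form in a profiled score and then read off its limit, in three stages: freeze the estimated ancestral graph, replace $\widehat{\bm\Omega}$ by the true precision $\bm\Omega$, and analyze the resulting ``extra sum of squares'' statistic. First I would dispose of the graph: $\mathcal D$ is the value of a deterministic map, $(\mathcal G^{+},\mathcal H)\mapsto\{(k,j)\in\mathcal H:\{(k,j)\}\cup\mathcal E^{+}\text{ is acyclic}\}$, and $\widehat{\mathcal D}$ is that same map applied to $(\widehat{\mathcal G}^{+},\mathcal H)$, so $\{\widehat{\mathcal G}^{+}=\mathcal G^{+}\}\subseteq\{\widehat{\mathcal D}=\mathcal D\}$ and $\mathbb P(\widehat{\mathcal D}=\mathcal D)\to1$ follows from the assumed consistency of Algorithm~\ref{algorithm:peeling}. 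On the event $\{\widehat{\mathcal G}^{+}=\mathcal G^{+},\,\widehat{\mathcal D}=\mathcal D\}$, whose probability tends to one, $\widehat{\bm\theta}^{(0)}$ and $\widehat{\bm\theta}^{(1)}$ maximize $\bm\theta\mapsto L(\bm\theta,\widehat{\bm\Omega})$ over the parameter spaces of the fixed acyclic graphs $\mathcal G^{+}_{0}=(\bm X,\bm Y;\mathcal E^{+}\setminus\mathcal D,\mathcal I^{+})$ and $\mathcal G^{+}_{1}=(\bm X,\bm Y;\mathcal E^{+}\cup\mathcal D,\mathcal I^{+})$, which are nested coordinate subspaces differing by the $|\mathcal D|$ coordinates of $\mathcal D$ (acyclicity of $\mathcal G^{+}_{1}$ is exactly the regularity hypothesis), and, under $H_0$, the data-generating $\bm\theta^{*}$ lies in the null subspace since $\mathrm U_{kj}=0$ for $(k,j)\in\mathcal D\subseteq\mathcal H$. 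Since $\mathbf I-\mathbf U^{\top}$ is, after a topological reordering, lower triangular with unit diagonal, $\det(\mathbf I-\mathbf U^{\top})=1$, so \eqref{equation:log-likelihood} is a genuine conditional log-likelihood and its score at $\bm\theta^{*}$ is centered.

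With $\bm\Omega$ held fixed, $L(\bm\theta,\bm\Omega)=-\tfrac12\sum_i\bm r_i(\bm\theta)^{\top}\bm\Omega\,\bm r_i(\bm\theta)+\tfrac n2\log\det\bm\Omega$ is concave quadratic in $\bm\theta$ through the affine residual $\bm r_i(\bm\theta)=(\mathbf I-\mathbf U^{\top})\mathbf Y_{+,i}-\mathbf W^{\top}\mathbf X_{+,i}=:\mathbf Y_{+,i}-\mathbf A_i\bm\theta$, and the $\log\det$ term is free of $\bm\theta$, so the statistic in \eqref{equation:test-statistic} (before standardization) equals $\min_{\mathcal G^{+}_0}\sum_i\bm r_i^{\top}\widehat{\bm\Omega}\bm r_i-\min_{\mathcal G^{+}_1}\sum_i\bm r_i^{\top}\widehat{\bm\Omega}\bm r_i$. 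By the profiled least-squares identity, exact because the constraint sets are coordinate subspaces containing $\bm\theta^{*}$, this difference equals $\|\bm\xi_n(\widehat{\bm\Omega})\|_2^{2}$, where for positive definite $\mathbf P$ one sets $\bm\xi_n(\mathbf P)=\mathbf M_n(\mathbf P)^{-1/2}\bm h_n(\mathbf P)\in\mathbb R^{|\mathcal D|}$, with $\bm h_n(\mathbf P)$ the $\mathcal D$-block of $\tfrac1{\sqrt n}\sum_i\mathbf A_i^{\top}\mathbf P\bm\varepsilon_i$ after projecting out the $\mathcal E^{+}\setminus\mathcal D$-columns in the metric $\tfrac1n\sum_i\mathbf A_i^{\top}\mathbf P\mathbf A_i$ and $\mathbf M_n(\mathbf P)$ the corresponding effective information (a Schur complement); crucially the common noise term $\sum_i\bm\varepsilon_i^{\top}\mathbf P\bm\varepsilon_i$ has cancelled. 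Writing $T^{*}(\mathcal D)$ for the statistic built from $\|\bm\xi_n(\bm\Omega)\|_2^2$, the difference $T(\widehat{\mathcal D})-T^{*}(\mathcal D)$ involves only the perturbation of this $|\mathcal D|$-dimensional quadratic form when $\bm\Omega$ is replaced by $\widehat{\bm\Omega}$; expanding $\|\bm\xi_n(\widehat{\bm\Omega})\|_2^2-\|\bm\xi_n(\bm\Omega)\|_2^2$ bounds it by $\|\widehat{\bm\Omega}-\bm\Omega\|_2$ times factors polynomial in $|\mathcal D|$ and in sample second moments, so the assumed rate $\|\widehat{\bm\Omega}-\bm\Omega\|_2^{2}=O_{\mathbb P}(|S|\log(p\vee n)/n)$ together with $|S|\log(p\vee n)/n\to0$ (resp. $|\mathcal D|\,|S|\log(p\vee n)/n\to0$) makes it $o_{\mathbb P}(1)$ (resp. $o_{\mathbb P}(\sqrt{|\mathcal D|})$, matching the standardization in \eqref{equation:test-statistic}).

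It remains to identify the limit of the oracle quadratic form $\|\bm\xi_n(\bm\Omega)\|_2^2$ under $H_0$. I would replace the random normalization $\mathbf M_n(\bm\Omega)$ by its probability limit $\mathbf M$ (law of large numbers for the averaged products $\mathbf A_i^{\top}\bm\Omega\mathbf A_i$, using finite moments of $\bm Y,\bm X$; in operator norm over the $\mathcal D$-block when $|\mathcal D|\to\infty$, which is where a dimension-versus-$n$ condition enters), incurring another $o_{\mathbb P}(1)$ (resp. $o_{\mathbb P}(\sqrt{|\mathcal D|})$) remainder, leaving $T^{*}(\mathcal D)=\|\bm\zeta_n\|_2^2+o_{\mathbb P}(\cdot)$ with $\bm\zeta_n=\mathbf M^{-1/2}\bm h_n(\bm\Omega)$ a deterministically normalized i.i.d.\ sum whose summands are mean zero by the centering of the score. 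For fixed $|\mathcal D|$, the multivariate CLT gives $\bm\zeta_n\dto N(\bm 0,\mathbf I_{|\mathcal D|})$, so $T(\widehat{\mathcal D})\dto\chi^2_{|\mathcal D|}$ by Slutsky's theorem. For $|\mathcal D|\to\infty$, I would instead invoke a central limit theorem for quadratic forms of i.i.d.\ vectors in growing dimension --- decomposing $\|\bm\zeta_n\|_2^2$ into its diagonal part plus a degenerate $U$-statistic and applying a martingale/de~Jong-type CLT --- to get $(\|\bm\zeta_n\|_2^2-|\mathcal D|)/\sqrt{2|\mathcal D|}\dto N(0,1)$, the Lindeberg/moment conditions being exactly what $|\mathcal D|\,|S|\log(p\vee n)/n\to0$ supplies; combining with the two $o_{\mathbb P}(\sqrt{|\mathcal D|})$ remainders yields $T(\widehat{\mathcal D})\dto N(0,1)$.

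The main obstacle is the growing-$|\mathcal D|$ regime, in two coupled respects. First, the $\widehat{\bm\Omega}$-plug-in and the $\mathbf M_n\to\mathbf M$ replacement must each be controlled below $o_{\mathbb P}(\sqrt{|\mathcal D|})$ while $p$, $|S|$ and $|\mathcal D|$ all grow: the naive envelope bound on the minimized quadratics is far too lossy (it carries the $O_{\mathbb P}(np)$ noise term), and one must exploit the cancellation of that term and track only the $O_{\mathbb P}(|\mathcal D|)$-sized remainder --- this is where the precise rate conditions originate. Second, the quadratic-form CLT must hold with a growing number of degrees of freedom while the ``design'' matrices $\mathbf A_i$ are random and, because of hidden confounding ($\bm\Sigma$ non-diagonal), correlated with $\bm\varepsilon_i$, so the centering and effective information cannot be read off an exogenous design but must come from the likelihood structure and the DAG's triangularity. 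The fixed-$|\mathcal D|$ case, by contrast, is essentially the classical Wilks theorem once the graph is frozen and $\widehat{\bm\Omega}$ is shown to be a negligible perturbation.
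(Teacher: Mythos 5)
Your architecture is the natural one and, as far as the paper's discussion reveals, matches its route: reduce to nested quadratic minimizations on the event $\{\widehat{\mathcal G}^{+}=\mathcal G^{+}\}$ (whence $\mathbb P(\widehat{\mathcal D}=\mathcal D)\to 1$ is immediate, since $\mathcal D$ is a deterministic function of $(\mathcal G^{+},\mathcal H)$), use $\det(\mathbf I-\mathbf U^{\top})=1$ to drop the Jacobian, cancel the common $\sum_i\bm\varepsilon_i^{\top}\mathbf P\bm\varepsilon_i$ term to isolate a rank-$|\mathcal D|$ profiled-score quadratic form, and finish with the multivariate CLT for fixed $|\mathcal D|$ or a de~Jong-type quadratic-form CLT for growing $|\mathcal D|$. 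The centering of the relevant score components is also correctly diagnosed: $\E\big[Y_k(\bm\Omega\bm\varepsilon)_j\big]=[(\mathbf I-\mathbf U^{\top})^{-1}\bm\Sigma\bm\Omega]_{kj}=[(\mathbf I-\mathbf U^{\top})^{-1}]_{kj}=0$ precisely when $\{(k,j)\}\cup\mathcal E^{+}$ is acyclic, which is where nondegeneracy and regularity enter.

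The genuine gap is the plug-in step for $\widehat{\bm\Omega}$. You claim the perturbation $\|\bm\xi_n(\widehat{\bm\Omega})\|_2^2-\|\bm\xi_n(\bm\Omega)\|_2^2$ is controlled by $\|\widehat{\bm\Omega}-\bm\Omega\|_2$ times moment factors, but for an \emph{arbitrary} perturbation $\bm\Delta=\widehat{\bm\Omega}-\bm\Omega$ at the assumed rate this is false. The score component for $\mathrm U_{kj}$ under the metric $\bm\Omega+\bm\Delta$ is $\sum_i\mathrm Y_{k,i}\big((\bm\Omega+\bm\Delta)\bm\varepsilon_i\big)_j$, whose conditional mean is $n\,[(\mathbf I-\mathbf U^{\top})^{-1}\bm\Sigma\bm\Delta]_{kj}=n\cdot O(\|\bm\Delta\|_2)$ — the exact cancellation that makes the score centered holds only at $\mathbf P=\bm\Omega$. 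Profiling does not remove this: the $\mathbf W$-scores are centered for every $\mathbf P$ (since $\Cov(\bm X,\bm\varepsilon)=\bm 0$), so there is nothing for the Schur-complement correction to cancel against. After normalizing by $\mathbf M_n^{-1/2}\asymp n^{-1/2}$, the induced shift in $\bm\xi_n$ is of order $\sqrt{n}\,\|\bm\Delta\|_2\asymp\sqrt{|S|\log(p\vee n)}$ per coordinate, which does not vanish under $|S|\log(p\vee n)/n\to 0$; a worst-case $\bm\Delta$ of the permitted operator-norm size would therefore destroy the $\chi^2_{|\mathcal D|}$ (and $N(0,1)$) limit. Closing this requires more than the rate condition: one must exploit the specific residual-based construction of $\widehat{\bm\Omega}$ (for which $\tfrac1n\sum_i\widehat{\bm\varepsilon}_i\widehat{\bm\varepsilon}_i^{\top}\widehat{\bm\Omega}\approx\mathbf I$, so the \emph{empirical} bias $\tfrac1n\sum_i\bm\varepsilon_i\bm\varepsilon_i^{\top}\bm\Delta$ on the relevant entries is second order in the estimation errors), or equivalently an envelope-theorem expansion showing that the derivative of $L(\widehat{\bm\theta}^{(1)},\cdot)-L(\widehat{\bm\theta}^{(0)},\cdot)$ in $\bm\Omega$ is the difference of the two residual Gram matrices, whose pairing with $\bm\Delta$ must be bounded using this empirical near-orthogonality rather than $\|\bm\Delta\|_2$ alone. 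As written, the assertion ``the assumed rate makes it $o_{\mathbb P}(1)$, resp.\ $o_{\mathbb P}(\sqrt{|\mathcal D|})$'' does not go through.
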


On the basis of Theorem \ref{theorem:asymptotic}, we conduct inference by substituting $|\mathcal D|$ by its estimate $|\widehat{\mathcal D}|$ and proceed with the  empirical rule: (1) use the chi-squared test when $|\widehat{\mathcal D}| < 50$, and (2) use the normal test when $|\widehat{\mathcal D}|\geq 50$. 

Theorem \ref{theorem:asymptotic} requires a good estimator $\widehat{\bm\Omega}$ of $\bm\Omega = \bm\Sigma^{-1}$ to account for the confounding effects, where $\bm\Sigma=\Cov(\bm\varepsilon)$. 
To estimate $\bm\Omega$, let $\widehat{\bm\varepsilon}_{+,i} = (\mathbf I - \widehat{\mathbf U})^\top \mathbf{Y}_{+,i} - \widehat{\mathbf W}^\top \mathbf X_{+,i}$; $i=1,\ldots,n$ be the estimated residuals after fitting \eqref{equation:model} with Algorithm \ref{algorithm:pruning}. 
Here we use the neighborhood selection method \citep{meinshausen2006high} with an additional refitting to obtain a positive definite estimate $\widehat{\bm\Omega}$.
In Supplementary Materials, we include the computational details and show that this estimator satisfies $\|\widehat{\bm\Omega}-\bm\Omega\|_F^2 = O_{\mathbb P}( |S| \log(p\vee n)/n)$ so that Theorem \ref{theorem:asymptotic} applies.

\begin{remark}\label{remark:irregular-hypothesis}
    In Theorem \ref{theorem:asymptotic}, we focus on nondegenerate and regular hypotheses. 
    For a degenerate case, we define the p-value as one. For an irregular case where $\mathcal{D}\cup \mathcal{E}^{+}$ contains a directed cycle, we decompose $H_0$ into sub-hypotheses $H_0^{(1)},\ldots,H_0^{(r)}$, each of which is regular. Then testing $H_0$ is reduced to multiple testing for $H_0^{(1)},\ldots,H_0^{(r)}$.
\end{remark}

Finally, we discuss two aspects of likelihood estimation and inference in the presence of unmeasured confounding. 
First, when $\bm\Sigma$ is non-diagonal, the likelihood in \eqref{equation:log-likelihood} cannot be factorized according to $\mathcal G$ (or $\mathcal G^+$). This implies that, unlike the case without latent confounders \citep{shojaie2010penalized}, the parameters of each equation in \eqref{equation:model} cannot be estimated separately given $\mathcal G^+$. Indeed, the likelihood estimation of $(\mathbf U,\mathbf W)$ in \eqref{equation:model} requires a preliminary estimate of $\bm\Omega$ to account for correlations arising from hidden confounding.
Furthermore, compared to \citet{li2023inference}, the likelihood ratio \eqref{equation:likelihood-ratio} is no longer a sum of likelihood ratios of equations associated with nondegenerate hypothesized edges, rendering inference more challenging in both computation and theory when hidden confounders are present. Computationally, the likelihood ratio \eqref{equation:likelihood-ratio} requires maximization of the full likelihood, which is costly for a large-scale graph. Theoretically, estimating $\bm\Omega$ and $(\mathbf U,\mathbf W)$ in high-dimensional situations may suffer from the curse of dimensionality. 

Second, to mitigate the challenges in inference, we may conduct inference with respect to a sub-DAG to achieve dimensionality reduction. Specifically, let $\mathcal D$ be the nondegenerate edges of $H_0$. 
Given ARG $\mathcal G^+$, we perform likelihood inference using a sub-DAG (of ARG) ${\mathcal G}^+_{\text{sub}} = ({\bm X}_{\text{sub}}, {\bm Y}_{\text{sub}}; {\mathcal E}_{\text{sub}}^+, {\mathcal I}_{\text{sub}}^+)$, where all edges specified in $\mathcal D$ are among primary variables ${\bm Y}_{\text{sub}}$, and ${\bm Y}_{\text{sub}}$ are non-descendants of $\bm Y\setminus {\bm Y}_{\text{sub}}$ in the graph $(\bm X,\bm Y; \mathcal E^+\cup\mathcal D, \mathcal I^+)$, ${\bm X}_{\text{sub}}$ is the set of intervention variables of ${\bm Y}_{\text{sub}}$, ${\mathcal E}_{\text{sub}}^+$ is the set of ancestral relations among ${\bm Y}_{\text{sub}}$, and ${\mathcal I}_{\text{sub}}^+$ is the set of interventional relations between ${\bm X}_{\text{sub}}$ and ${\bm Y}_{\text{sub}}$ in ARG $\mathcal G^+$. Then the test statistic \eqref{equation:test-statistic} is computed within the sub-DAG ${\mathcal G}_{\text{sub}}^+$, which reduces computation. Furthermore, Theorem \ref{theorem:asymptotic} holds true when the estimator of the smaller precision matrix ${\bm\Omega}_{\text{sub}}$ enjoys the desired convergence rate $O_{\mathbb P}(\sqrt{|{S}_{\text{sub}}|\log({p}_{\text{sub}}\vee n)/n})$ in operator norm, where the subscript $_{\text{sub}}$ denotes the quantities corresponding to the structural equations of ${\bm Y}_{\text{sub}}$.

\section{Theory}\label{section:causal-discovery-theory}

In this section, we develop a theory to quantify the finite sample performance as well as the complexities of Algorithms \ref{algorithm:peeling}--\ref{algorithm:pruning} when TLP is used for computation. 

To proceed, we introduce some technical conditions for casual discovery consistency. 
For $(k,j)\in\mathcal E^{+}$, let $\widetilde{\bm \Sigma}^{(k,j)}$ be the covariance matrix of $(\E(Y_k \mid \bm Y_{\nui_{\mathcal G}(k,j)}, \bm X), \bm Y_{\nui_{\mathcal G}(k,j)}, \bm X)$.
Moreover, let $s = \max_{(k,j)\in\mathcal E^{+}} (\kappa+\nu_2, \nu_1) \vee \max_{1\leq k\leq p}\|\mathbf{V}_{+,k}\|_0$ be the maximum sparsity-level in the estimation procedure, {where $\nu_1,\nu_2,\kappa$ depends on $(k,j)$ which is dropped for conciseness.} 
Assume there exist constants $c_0,c_1,c_2,c_3>0$ such that 
    \begin{enumerate}
        \item [(C1)]
        $\min_{(k,j)\in\mathcal E^{+}}\min_{B:|B|\leq 2s}\min_{\bs v: \|\bs v\|_2= 1, \|\bs v_{B^c}\|_1\leq 3\|\bs v_{B}\|_1 + c_0s\sqrt{\log(p)/n}} \langle \bs v, \widetilde{\bm\Sigma}^{(k,j)}\bs v \rangle \geq c_1$.
        
        \item [(C2)] $\min_{\mathrm V_{kj}\neq 0} |\mathrm V_{kj}| \geq c_2 \sqrt{{\log(q\vee n)}/{n}}$.

        \item [(C3)] $\min_{\mathrm U_{kj}\neq 0} |\mathrm U_{kj}| \geq c_3 \sqrt{\log(p\vee n)/n}$.

        \item [(C4)] $\max_{1\leq k\leq p} \{ |\an_{\mathcal G}(k)|,|\inter_{\mathcal G}(k)|,\|\mathbf U_{+,k}\|_1\} = O(1)$, and $\max_{(k,j)\in\mathcal E^{+}}(\Diag(\widetilde{\bm\Sigma}^{(k,j)})) = O(1)$.
    \end{enumerate}
Condition (C1) is a restricted eigenvalue condition, which is common in high-dimensional estimation \citep{bickel2009simultaneous} and can be viewed as a stronger version of (A1) in Theorem \ref{theorem:identifiability}.  (C2) and (C3) impose restrictions on the minimal signal strengths of $\mathbf V$ and $\mathbf U$ so that the ARG $\mathcal G^{+}$ and DAG $\mathcal G$ can be consistently recovered, respectively. They are similar to the beta-min condition \citep{meinshausen2006high} and the degree of separation condition \citep{shen2012likelihood} in the variable selection literature.

\begin{theorem}\label{theorem:consistency}
    Suppose Assumptions (A1)--(A3) in Theorem \ref{theorem:identifiability} are satisfied and assume $\bm X$ is sub-Gaussian with mean zero and parameter $\varsigma^2$.
    \begin{enumerate}[(A)]
        \item {(Parameter estimation)} Suppose (C1), (C2), (C4) are met with sufficiently large $c_0,c_1,c_2$. 
        Suppose the tuning parameters are suitably chosen such that 
        \begin{enumerate}[(i)]
            \item In Algorithm \ref{algorithm:peeling}, $0.01c_2\sqrt{\log(q\vee n)/n} \leq \tau' \leq 0.4\min_{\mathrm V_{kj}\neq 0}|\mathrm V_{kj}|$, $\kappa'_j = \|\bs V_{+,j}\|_0$ for $1\leq j\leq p$.

            \item In Algorithm \ref{algorithm:pruning}, $0.5 c_3 \sqrt{{\log(p\vee n)}/{n}} \leq \tau$, $\nu_1 = \lceil \operatorname{TLP}_{\tau}((\bs \alpha_1,\bs \alpha_2))\rceil$, $\nu_2 = \lceil \operatorname{TLP}_{\tau}(\bm\gamma)\rceil$, and $\kappa = \lceil \operatorname{TLP}_{\tau}(\bm\beta)\rceil$ for any $(k,j)\in\mathcal E^{+}$.
        \end{enumerate}
        Then there exists constant $C_1 > 0$ such that when $n$ is sufficiently large
        \begin{equation*}
            |\widehat{\mathrm{U}}_{kj} - {\mathrm{U}_{kj}}| \leq C_1 \sqrt{{\log(p\vee n)}/{n}},
        \end{equation*}
         almost surely under $\mathbb P_{(\mathbf U,\mathbf W,\bm\Sigma)}$. Moreover, Algorithms \ref{algorithm:peeling} and \ref{algorithm:pruning} respectively terminate in $O(p\times \log(s)\times (q^3 + nq^2) )$ and $O(|\mathcal E^{+}|\times \log(s) \times (q^3 + nq^2))$ operations almost surely.
        
        \item {(Graph recovery)} Additionally, if (C3) is satisfied with $c_3 > C_1 > \tau$, then when $n$ is sufficiently large we have $\widehat{\mathcal G} = \mathcal G$ almost surely.
        
    \end{enumerate}
\end{theorem}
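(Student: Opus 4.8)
The plan is to prove the two parts by working on a single ``good event'' on which $\mathbf V$ is recovered exactly, and then upgrading all high‑probability statements to ``almost surely for large $n$'' via Borel--Cantelli. Throughout I would first record that, under sub‑Gaussianity of $\bm X$ with parameter $\varsigma^2$ together with (C4) (so that $\|\mathbf U_{+,k}\|_1 = O(1)$ and the entries of $\widetilde{\bm\Sigma}^{(k,j)}$ are bounded), the representation \eqref{equation:y-x-model} makes $(\bm Y,\bm X)$ jointly sub‑Gaussian, which supplies the standard concentration for sample Gram matrices and cross‑products with deviations of order $\sqrt{\log(p\vee q\vee n)/n}$ and failure probabilities of the form $(p\vee q\vee n)^{-c}$.

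\emph{Stage 1 (recovery of $\mathbf V$, $\mathcal G^{+}$, and the candidate IVs).} The regressions \eqref{equation:estimate-v} are $\ell_0$‑constrained least squares solved through the TLP surrogate, so I would invoke the restricted‑eigenvalue analysis of TLP/$\ell_0$ regression \citep{shen2012likelihood,bickel2009simultaneous}: condition (C1) (a strengthening of (A1)) provides the restricted eigenvalue on the relevant cone, and with the stated choice $\tau',\kappa'_j=\|\mathbf V_{+,j}\|_0$ this yields $\max_j\|\widehat{\mathbf V}_{+,j}-\mathbf V_{+,j}\|_\infty \lesssim \sqrt{\log(q\vee n)/n}$ on an event of probability $1-(q\vee n)^{-c}$. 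The beta‑min condition (C2) then forces $\operatorname{sign}(\widehat{\mathbf V})=\operatorname{sign}(\mathbf V)$, i.e.\ exact support recovery. Conditional on exact support recovery, the peeling recursion in Algorithm \ref{algorithm:peeling} is deterministic: the ``only if'' parts of Propositions \ref{proposition:peeling}--\ref{proposition:peeling2} rule out spurious edges in \eqref{equation:leaf} and \eqref{equation:unmediate-parent-edge}, the ``if'' parts guarantee every unmediated‑parent edge is found, and the transitive‑closure step of Algorithm \ref{algorithm:peeling} then recovers all of $\mathcal E^{+}$. Hence $\widehat{\mathcal E}^{+}=\mathcal E^{+}$, $\widehat{\mathcal I}^{+}=\mathcal I^{+}$, and $\widehat{\ca}_{\mathcal G}(k)=\ca_{\mathcal G}(k)$ for all $k$ on this event.

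\emph{Stage 2 (rate for $\widehat{\mathrm U}_{kj}$).} Work on the event of Stage 1 and induct on $d$, the length of the longest directed path from $Y_k$ to $Y_j$ in $\mathcal G^{+}$. For $d=1$, $\me(k,j)=\emptyset$ so $\overline Y_j=Y_j$; for the inductive step, every mediator $m\in\me(k,j)$ has longest $m$-to-$j$ path of length at most $d-1$ (prepending a $k$-to-$m$ path to a longer one would contradict maximality), so $\widehat{\mathrm U}_{mj}$ is available from an earlier stage and obeys the inductive bound, and $|\me(k,j)|=O(1)$ by (C4). The crucial algebraic fact is \eqref{equation:invalid-iv-regression}: under the majority rule (A3) its sparse representation is unique, so the true $(\mathrm U_{kj},\bm\gamma^{\ast},\bm\beta^{\ast})$ with $\|\bm\beta^{\ast}\|_0=|\ca(k)\setminus\iv(k)|<|\ca(k)|/2\le\kappa$ is the unique population minimizer of \eqref{equation:identification-u}, and it lies in the constraint set of \eqref{equation:working-response-regression}. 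The empirical problem perturbs the population one in two ways: (a) the imputed surrogate $\widehat Y_k$ from \eqref{equation:imputation} replaces $\widetilde Y_k=\E(Y_k\mid \bm Y_{\nui(k,j)},\bm X)$ (Lemma \ref{lemma:linear} guarantees linearity of this conditional expectation), a design perturbation whose size is the $\ell_0$‑regression error of \eqref{equation:imputation}, of order $\sqrt{\log(p\vee n)/n}$ by the same TLP analysis with (C1)/(C4); and (b) $\widehat{\bm U}_{\me(k,j),j}$ replaces the truth in $\widehat{\overline Y}_j$, a response perturbation of order $\sqrt{\log(p\vee n)/n}$ by the inductive bound and $|\me(k,j)|=O(1)$. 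Feeding both perturbations into the restricted‑eigenvalue analysis of the $\ell_0$‑constrained regression — (C1) supplying curvature on $\widetilde{\bm\Sigma}^{(k,j)}$, (A3) ensuring the $\ell_0$ budget isolates the correct invalid‑IV support — gives $|\widehat{\mathrm U}_{kj}-\mathrm U_{kj}|\le C_1\sqrt{\log(p\vee n)/n}$ with $C_1$ uniform over the at most $p^2$ pairs in $\mathcal E^{+}$, the union bound being absorbed into the $\log(p\vee n)$ factor. Because (C4) bounds $|\an_{\mathcal G}(k)|=O(1)$, the induction has only $O(1)$ stages, so $C_1$ does not accumulate geometrically in $d$.

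\emph{Stage 3 (complexity and graph recovery; obstacle).} Each $\ell_0$‑constrained subproblem is solved by a DC/TLP procedure whose outer sparsity search takes $O(\log s)$ iterations and whose per‑iteration cost is dominated by forming and solving a $q\times q$ least‑squares system, $O(q^3+nq^2)$; Algorithm \ref{algorithm:peeling} runs $p$ such subproblems plus lower‑order peeling bookkeeping and Algorithm \ref{algorithm:pruning} runs two per edge of $\mathcal E^{+}$, giving the stated $O(p\log s\,(q^3+nq^2))$ and $O(|\mathcal E^{+}|\log s\,(q^3+nq^2))$ bounds, which hold a.s.\ for large $n$ since on the good event the algorithms terminate as analyzed. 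For part (B), on the good event combine $\widehat{\mathcal G}^{+}=\mathcal G^{+}$ (so the support of $\widehat{\mathbf U}$ is searched over exactly $\mathcal E^{+}$) with the Stage‑2 rate and (C3): when $\mathrm U_{kj}=0$ the TLP‑thresholded selection with threshold $\ge 0.5c_3\sqrt{\log(p\vee n)/n}$ returns $\widehat{\mathrm U}_{kj}=0$, since the unpenalized estimate is $O(\sqrt{\log(p\vee n)/n})$ and an extra nonzero does not pay for itself, whereas when $\mathrm U_{kj}\ne 0$ the bound $c_3>C_1$ gives $|\widehat{\mathrm U}_{kj}|\ge(c_3-C_1)\sqrt{\log(p\vee n)/n}>0$; hence $\widehat{\mathcal G}=\mathcal G$. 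All the preceding ``with high probability'' claims become ``almost surely for $n$ large'' by Borel--Cantelli, since the failure probabilities are $(p\vee q\vee n)^{-c}$ and summable in $n$ for $c$ large, which is legitimate because $c_0,c_1,c_2$ are taken sufficiently large. The main obstacle is Stage 2: simultaneously handling the errors‑in‑variables nature of the imputed surrogate $\widehat Y_k$ and the recursive accumulation of mediator‑effect errors while showing the $\ell_0$ constraint still pins down the invalid‑IV support under only the majority rule (A3), and — the true crux — keeping $C_1$ uniform over all $(k,j)\in\mathcal E^{+}$ and independent of the recursion depth, which is exactly where (C4) ($|\an_{\mathcal G}(k)|,|\me(k,j)|=O(1)$) is indispensable.
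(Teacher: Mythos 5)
Your proposal is correct and follows what is essentially the paper's own route: establish exact support recovery of $\mathbf V$ from the TLP/$\ell_0$ regressions under (C1)--(C2) so that Propositions \ref{proposition:peeling}--\ref{proposition:peeling2} make Algorithm \ref{algorithm:peeling} deterministic and recover $\mathcal G^{+}$ and the candidate IVs, then induct on the longest-path length $d$ (bounded by $O(1)$ via (C4)) to control the two-stage working-response regressions with the majority rule (A3) pinning down the invalid-IV support, and finally convert high-probability bounds to almost-sure statements via Borel--Cantelli and obtain graph recovery by thresholding under (C3). The roles you assign to each condition, the treatment of the imputed surrogate and accumulated mediator-effect errors, and the source of the $\log(s)$ factor in the complexity all match the paper's argument, so no substantive gap remains beyond routine bookkeeping.
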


By Theorem \ref{theorem:consistency}, the proposed method achieves causal discovery consistency in terms of consistent parameter estimation and structure recovery. 
Moreover, Algorithms \ref{algorithm:peeling}--\ref{algorithm:pruning} enjoy low-order polynomial time complexity almost surely provided that the data are randomly sampled from \eqref{equation:model}.

\section{Numerical examples}\label{section:numerical}

\subsection{Simulations}

This subsection investigates via simulations the operating characteristics of GrIVET, including the qualities of structure learning, parameter estimation, and statistical inference.

To generate an observation $(\bm Y,\bm X)$, we first introduce hidden variables $\bm \eta \sim N(\bm 0, \mathbf{I}_{r\times r})$ as unmeasured confounders. 
Then, we sample $\bm X$ from $N(\bm 0, \mathbf{I}_{q \times q})$ for continuous interventions or from $\{-1,1\}^q$ with equal probability for discrete interventions. 
Given $\bm X$ and $\bm\eta$, we generate $\bm Y$ according to 
\begin{equation}\label{equation:simulation}
\bm Y=\mathbf{U}^{\top} \bm Y+\mathbf{W}^{\top} \bm X+ \bm \Phi^{\top} \bm \eta + \bm e, \quad \bm e \sim N\left(\mathbf{0}, \Diag\left(\sigma_{1}^{2}, \ldots, \sigma_{p}^{2}\right)\right).
\end{equation}
We conduct simulations with the following settings. 
\begin{itemize}
  \item \textbf{Hub graph.} Let $p=101$, $q=252$, and $r=10$. 
  For $\mathbf U$, $(\mathrm{U}_{1,j})_{2\leq j\leq p}$ are independently sampled from $\{-1,1\}$ with equal probability,  while the rest are set to $0$. This generates a sparse graph with the dense neighborhood of the first node. 
  Let $\mathbf{W}_{q\times p}=(\mathbf{I}_{p \times p},\mathbf{I}_{p \times p}, \mathbf{F}^{\top})^{\top}$ where the entries $(\mathrm{F}_{j,2j}, \mathrm{F}_{j,2j+1})_{1\leq j\leq q-2p}$ are set to $1$, while other entries of $\mathbf{F}$ are zero. Then $X_{j},X_{2j}$ are IVs of $Y_j$ for $j=1,\ldots,p$ and $X_{2p+1},\ldots,X_{q}$ are invalid IVs with two intervention targets. 
  For the confounders, $\mathrm{\Phi}_{1,1}$ and $(\mathrm{\Phi}_{jk})^{1\leq j\leq r}_{10j-8\leq k\leq 10j+1}$ are sampled uniformly from $(-0.4,-0.6)\cup(0.4,0.6)$, while other entries of $\mathbf{\Phi}$ are zero. We generate $(\sigma_{1}, \ldots, \sigma_{p})$ uniformly from $(0.4,0.6)$.

  \item \textbf{Random graph.} Let $p=100$, $q=250$, and $r=10$. 
  For $\mathbf U$, the upper off-diagonals $(\mathrm{U}_{kj})_{k<j}$ are sampled independently from $\{0,1\}$ according to $\text{Bernoulli}({1}/{10p})$ while other entries are zero. Set $\mathbf{W}_{q\times p}=(\mathbf{I}_{p \times p},\mathbf{I}_{p \times p}, \mathbf{F}^{\top})^{\top}$ where $(\mathrm{F}_{j,2j-1},\mathrm{F}_{j,2j})_{1\leq j \leq 1-2p}$ are set to $1$, while other entries of $\mathbf{F}$ are zero. Then $X_{j},X_{2j}$ are IVs of $Y_j$ for $j=1,\ldots,p$ and $X_{2p+1},\ldots,X_{q}$ are invalid IVs with two intervention targets. For the confounders, $(\mathrm{\Phi}_{jk})^{1\leq j\leq r}_{10j-9\leq k\leq 10j}$ are sampled uniformly from $(-0.4,-0.6)\cup(0.4,0.6)$, while other entries of $\mathbf{\Phi}$ are zero. We generate $(\sigma_{1}, \ldots, \sigma_{p})$ uniformly from $(0.4,0.6)$.
\end{itemize}

\paragraph*{Structure learning.}
After obtaining ancestral relations from Algorithm \ref{algorithm:peeling}, we implement Algorithm \ref{algorithm:pruning} to confirm parental relations but with constraints also imposed on the parameter of interest. Four graph metrics are used for evaluation: the false discovery rate (FDR), the true positive rate (TPR), the Jaccard index (JI), and the structural Hamming distance (SHD). 
The results in Table \ref{table:part_1_2} demonstrate the strong performance of GrIVET in structure learning. Note that a high TPR indicates GrIVET's capability to detect the true existing edges, while the FDR remains low, signifying the high specificity of GrIVET. In Supplementary Materials Section 3.3, we further compare GrIVET with RFCI \citep{colombo2012learning} and LRpS-GES \citep{frot2019robust} in terms of structural learning accuracy. GrIVET compares favorably against the competitors.

\begin{table}[ht] 
\footnotesize
\centering
\caption{False discovery rate (FDR), true positive rate (TPR), structural Hamming distance (SHD), and Jaccard index (JI) of GrIVET for causal discovery over 1000 simulation replications. To compute the metrics, let TP, RE, FP, and FN be the numbers of identified edges with correct directions, those with wrong directions, estimated edges not in the skeleton of the true graph, and missing edges compared to the true skeleton. Then $\mathrm{FDR}=(\mathrm{RE}+\mathrm{FP}) / (\mathrm{TP}+\mathrm{RE}+\mathrm{FP})$, $\mathrm{TPR}=\mathrm{TP} /(\mathrm{TP}+\mathrm{FN})$, $\mathrm{SHD}=\mathrm{FP}+\mathrm{FN}+\mathrm{RE}$, and $\mathrm{JI}=\mathrm{TP} /(\mathrm{TP}+\mathrm{SHD})$.}
\label{table:part_1_2}
\begin{tabular}{l l c c c c c}
\hline \multicolumn{1}{l}{Graph} & \multicolumn{1}{l}{Intervention} & \multicolumn{1}{c}{$n$} & \multicolumn{1}{c}{FDR($\%$)} & \multicolumn{1}{c}{TPR($\%$)} & \multicolumn{1}{c}{SHD} & \multicolumn{1}{c}{JI($\%$)} \\ 
\hline
 Hub  & Continuous & 500 &   0.000 & 100.000 & 0.000 & 100.000 \\
     &            &  400 &   0.000 &  99.998 & 0.002 &  99.998 \\
     &            &  300 &   0.000 &  99.998&  0.002 &  99.998 \\
     & Discrete   & 500 &   0.000 &  99.999 & 0.001 &  99.999 \\
     &            &  400 &   0.000 &  99.998 & 0.002 &  99.998 \\
     &            &  300 &   0.000 &  99.999 & 0.001 &  99.999 \\ \hline
Random & Continuous   & 500 &   0.011 &   98.600 & 0.001 &  98.589 \\
      &            &  400 &   0.000 &   98.600 &  0.000 &  98.600 \\
     &            &  300 &   0.018 &   98.590 &  0.003 &  98.575 \\
     & Discrete   & 500 &   0.000 &   98.600 &  0.000 &  98.600 \\
     &            &  400 &   0.024 &   98.600 &  0.002 &  98.576 \\
     &            &  300 &   0.000 &   98.600 &  0.000 &  98.600 \\
\hline
\end{tabular}
\end{table}

\paragraph*{Parameter estimation.}

\begin{table}[ht] 
    \footnotesize
    \centering
    \caption{Parameter estimation: the average of largest absolute difference (Max AD), the average absolute differences (Mean AD), and the average squared differences (Mean SqD) between the estimated parameters and the true parameters for two competing methods over 1000 simulation replications.}
    \label{table:part_2_2}
    \begin{tabular}{l l c c c c}
      \hline \multicolumn{1}{l}{Graph} & \multicolumn{1}{l}{Intervention} & \multicolumn{1}{c}{$n$} & \multicolumn{1}{c}{GrIVET} & & \multicolumn{1}{c}{Direct regression \citep{li2023inference}}\\ 
  & & & (Max AD, Mean AD, Mean SqD)& & (Max AD, Mean AD, Mean SqD) \\
    \hline
    Hub  & Continuous & 500 & (0.06107, 0.01808, 0.00052) & & (0.12817, 0.02448, 0.00142)\\
       &            &  400 & (0.06863, 0.02037, 0.00066) & & (0.13196, 0.02637, 0.00156)\\
       &            &  300 & (0.07922, 0.02347, 0.00087) & & (0.13395, 0.02873, 0.00170)\\
       & Discrete   & 500 & (0.06119, 0.01803, 0.00051) & & (0.12770, 0.02434, 0.00141)\\
       &            &  400 & (0.06932, 0.02030, 0.00065) & & (0.13041, 0.02621, 0.00153)\\
       &            &  300 & (0.08046, 0.02355, 0.00088) & & (0.13334, 0.02867, 0.00169)\\ \hline
  Random & Continuous   & 500 & (0.02836, 0.01445, 0.00034) & & (0.04254, 0.01791, 0.00076)\\
       &            &  400 & (0.03245, 0.01660, 0.00045) & & (0.04390, 0.01899, 0.00079)\\
       &            &  300 & (0.03760, 0.01939, 0.00060) & & (0.04709, 0.02150, 0.00091)\\
       & Discrete   & 500 & (0.02910, 0.01505, 0.00037) & & (0.04287, 0.01808, 0.00075)\\
       &            &  400 & (0.03272, 0.01686, 0.00046) & & (0.04432, 0.01962, 0.00081)\\
       &            &  300 & (0.03619, 0.01879, 0.00057) & & (0.04756, 0.02146, 0.00094)\\
    \hline
    \end{tabular}
  \end{table}

We compare the proposed IV estimation method in Section \ref{section:known-order-estimation} with the regression method without any adjustment for confounding \citep{li2023inference}. To evaluate the quality of estimation, we consider three metrics, the average maximum absolute deviation, the mean absolute deviation, and the mean square deviation between true coefficients and estimates over 1000 runs.
As demonstrated in Table \ref{table:part_2_2}, GrIVET enhances parameter estimation by accounting for latent confounding. As anticipated, GrIVET's estimation improves with increasing sample size $n$, while the naive regression method \citep{li2023inference} remains inconsistent. Furthermore, GrIVET's advantages become more pronounced when stronger confounding effects are present, as evidenced by additional simulations in the Supplementary Materials.

\newpage
\paragraph*{Inference.}

We now evaluate the empirical performance of the proposed tests in terms of size and power.
For the empirical size, we calculate the percentage of times $H_0$ is rejected out of 1000 simulations when $H_0$ is true. For the power, we consider three alternative hypotheses $H_a$, where all the edges in $H_0$ exist. The empirical power of a test is the percentage of times $H_0$ is rejected out of 1000 simulations when $H_a$ is true.
The adjacency matrix $\mathbf U$ is modified according to the null and alternative hypotheses.

\begin{itemize}
  \item \textbf{Hub graph, fixed $\mathcal H$.} For the size, consider $\mathcal{H} = \{(2,7)\}$, $\mathcal{H} = \{(2,7),(7,12),(12,17)\}$, and $\mathcal{H} = \{(2,7),(7,12),(12,17),(17,22),(22,27)\}$. For the power, consider $\mathcal{H} = \{(1,2)\}$, $\mathcal{H} = \{(1,2),(1,12),(1,22)\}$, and $\mathcal{H} = \{(1,2),(1,12),(1,22),(1,32),(1,42)\}$.
  \item \textbf{Random graph, fixed $\mathcal H$.} We consider $\mathcal{H} = \{(1,6)\}$, $\mathcal{H} = \{(1,6),(6,11),(11,16)\}$, and $\mathcal{H} = \{(1,6),(6,11),(11,16),(16,21),(21,26)\}$ for both size and power.

  \item \textbf{Random graph, random $\mathcal H$.} 
  We also consider testing 50 randomly selected edges individually. Here, a random graph is generated so that 20 of these selected edges are present in the true DAG (i.e., $H_a$ is valid). As a result, for every selected edge, $H_0$ holds in roughly $600$ repetitions and $H_a$ holds in roughly $400$ repetitions.
  
\end{itemize}

As shown in Table \ref{table:part_3_3} for fixed $\mathcal H$, empirical sizes are close to the nominal $\alpha=0.05$ under $H_0$, and the proposed test enjoys desirable power under $H_a$. 
Figure \ref{Fig:random_edges} presents similar results for testing random $\mathcal H$. The Supplementary Materials display that the sampling distribution of the test statistic is close to the derived asymptotic distribution in Theorem \ref{theorem:asymptotic}. Additional simulation details and results are also available in Supplementary Materials.

\begin{table}[ht] 
    \footnotesize
    \centering
    \caption{Empirical size for GrIVET at nominal level $\alpha=0.05$, respectively for $|\mathcal{D}|=1$, $|\mathcal{D}|=3$ and $|\mathcal{D}|=5$, over 1000 simulation replications.}
    \label{table:part_3_3}
    \begin{tabular}{l l c c c}
    \hline \multicolumn{1}{l}{Graph} & \multicolumn{1}{l}{Intervention} & \multicolumn{1}{c}{$n$} & \multicolumn{1}{c}{Size ($|\mathcal{D}|=1,3,5$)} & \multicolumn{1}{c}{Power ($|\mathcal{D}|=1,3,5$)}\\ \hline 
    Hub  & Continuous &  500 & (0.028,0.026,0.029) & (1.000,1.000,1.000) \\
         &            &  400 & (0.043,0.038,0.035) & (1.000,1.000,1.000) \\
         &            &  300 & (0.037,0.030,0.034) & (1.000,1.000,1.000) \\
         & Discrete   &  500 & (0.036,0.040,0.027) & (1.000,1.000,1.000) \\
         &            &  400 & (0.051,0.040,0.040) & (1.000,1.000,1.000) \\
         &            &  300 & (0.052,0.041,0.035) & (1.000,1.000,1.000) \\ \hline
    Random & Continuous   & 500 & (0.038,0.037,0.026) & (1.000,1.000,1.000) \\
         &            &  400 & (0.033,0.031,0.028) &   (1.000,1.000,1.000) \\
         &            &  300 & (0.033,0.025,0.030) &   (1.000,1.000,1.000) \\
         & Discrete   &  500 & (0.040,0.029,0.027) &   (1.000,1.000,1.000) \\
         &            &  400 & (0.042,0.034,0.040) &   (1.000,1.000,1.000) \\
         &            &  300 & (0.029,0.033,0.034) &   (1.000,1.000,1.000)\\
    \hline
    \end{tabular}
    \end{table}
    
    \begin{figure}[H]
        \centering
        \begin{subfigure}{.4\textwidth}
            \includegraphics[width=1\textwidth]{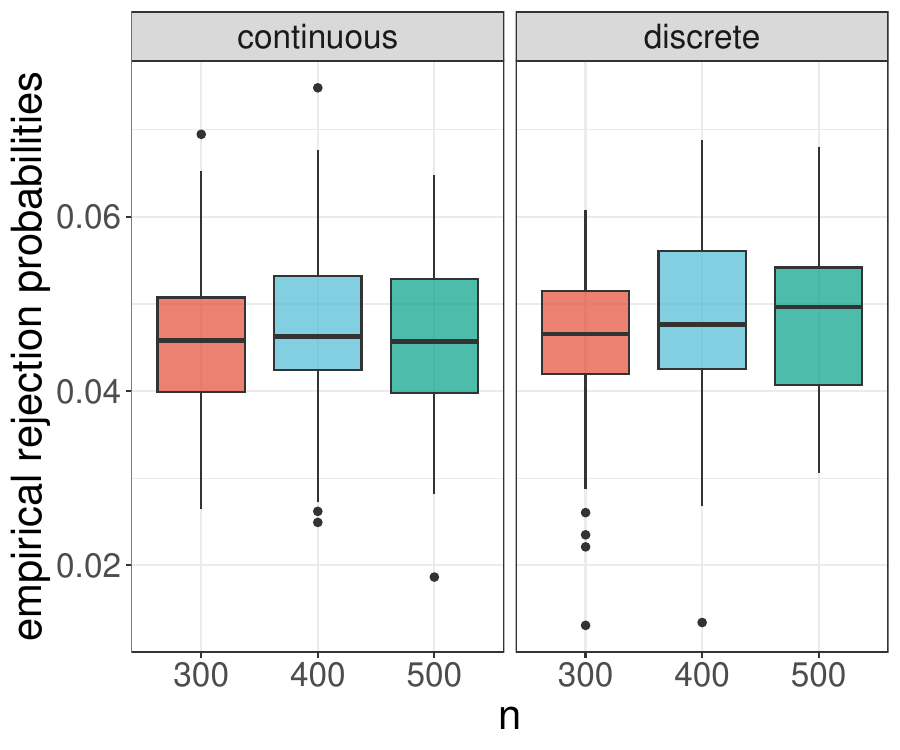}
            \caption{$H_0$ holds}
        \end{subfigure}
        \begin{subfigure}{.4\textwidth}
            \includegraphics[width=1\textwidth]{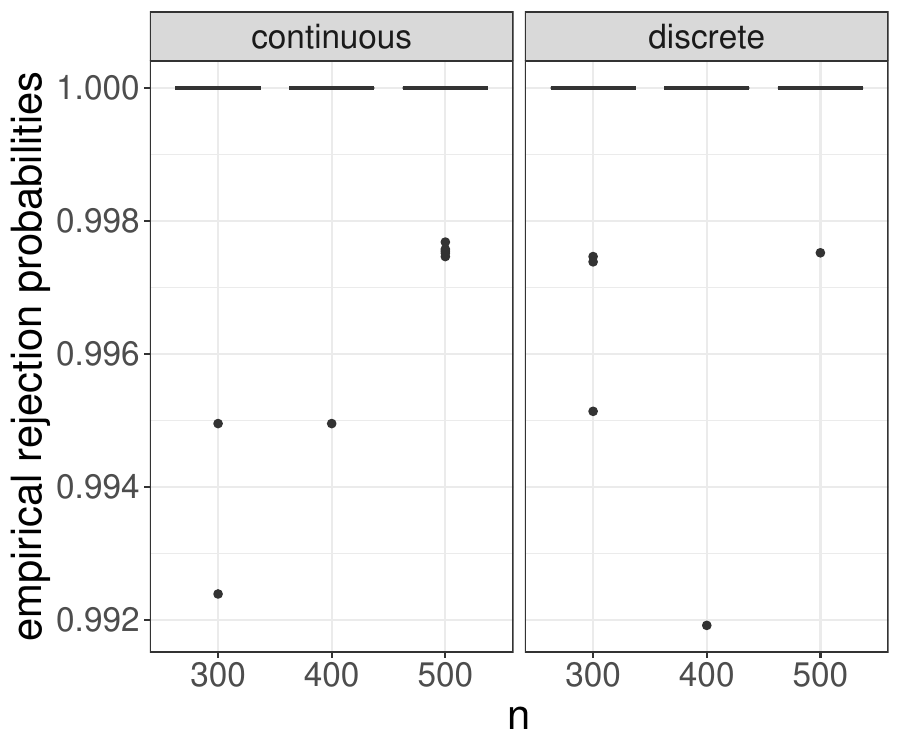}
            \caption{$H_a$ holds}
        \end{subfigure}
        \caption{The boxplots of the empirical rejection probabilities for testing randomly selected edges. The nominal level is $\alpha = 0.05$.} 
        \label{Fig:random_edges}
    \end{figure}

\subsection{ADNI data analysis}

In this subsection, GrIVET is applied to analyze the Alzheimer's Disease Neuroimaging Initiative (ADNI) dataset (available at \url{https://adni.loni.usc.edu}). The goal is to infer gene pathways related to Alzheimer's Disease (AD) in order to elucidate the gene-gene interactions in  AD/cognitive impairment patients and healthy individuals, respectively.

\paragraph*{Dataset.}
The dataset comprises gene expression levels adjusted for five covariates: gender, handedness, education level, age, and intracranial volume.
For data analysis, we select genes with at least one SNP at a marginal significance level below $10^{-14}$, resulting in $p=21$ genes as primary variables.
For these genes, we further extract their marginally most correlated two SNPs, yielding $q=42$ SNPs as unspecified intervention variables for subsequent data analysis. All gene expression levels are normalized.

The dataset initially categorizes individuals into four groups: Alzheimer's Disease (AD), Early Mild Cognitive Impairment (EMCI), Late Mild Cognitive Impairment (LMCI), and Cognitive Normal (CN). For our analysis, we treat 247 CN individuals as controls and the remaining 462 individuals as cases (AD-MCI). We then use the gene expressions and the SNPs to infer gene pathways for the 462 AD-MCI and 247 CN control cases, respectively.

\begin{figure}[H]
    \centering
    \begin{subfigure}{.4\textwidth}
        \includegraphics[width=1\textwidth]{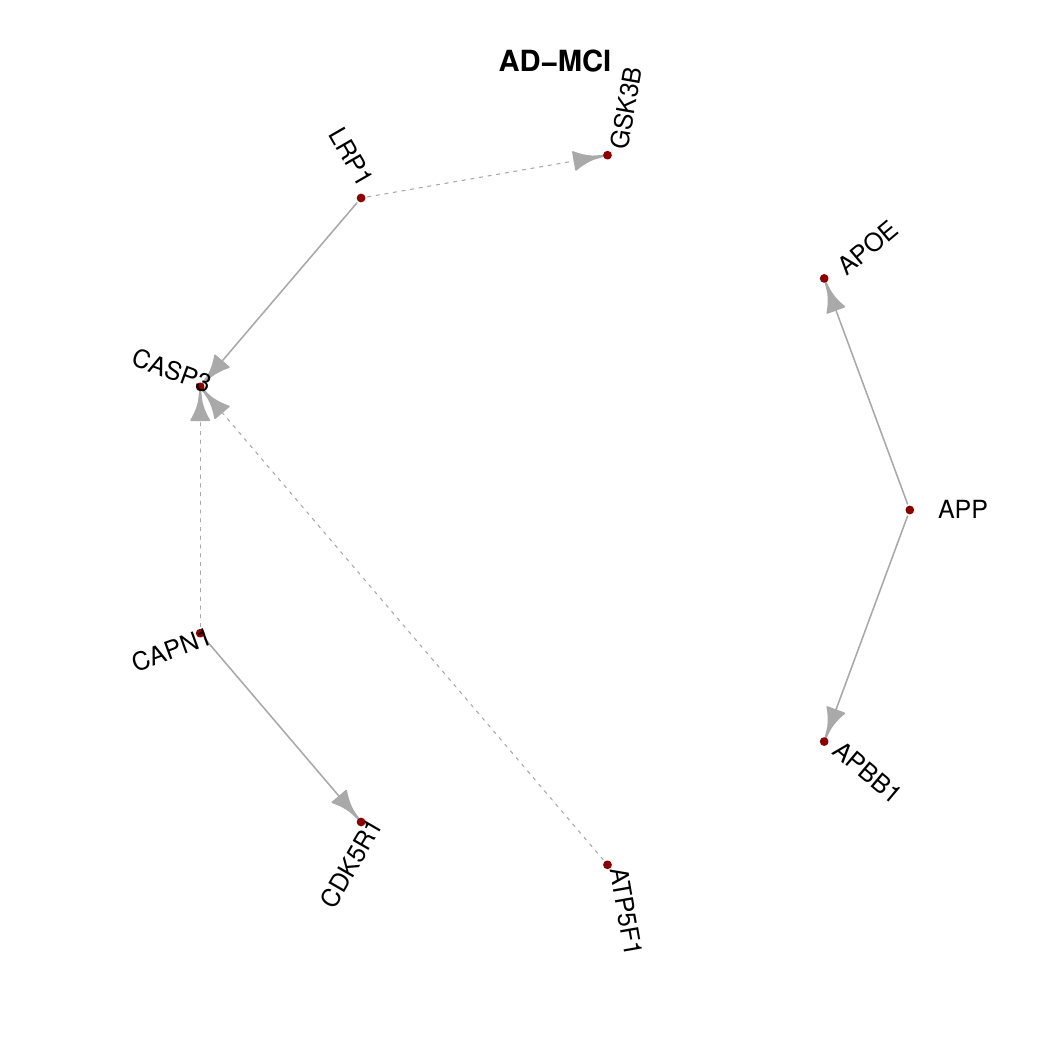}
        \caption{AD-MCI}
    \end{subfigure}
    \begin{subfigure}{.4\textwidth}
        \includegraphics[width=1\textwidth]{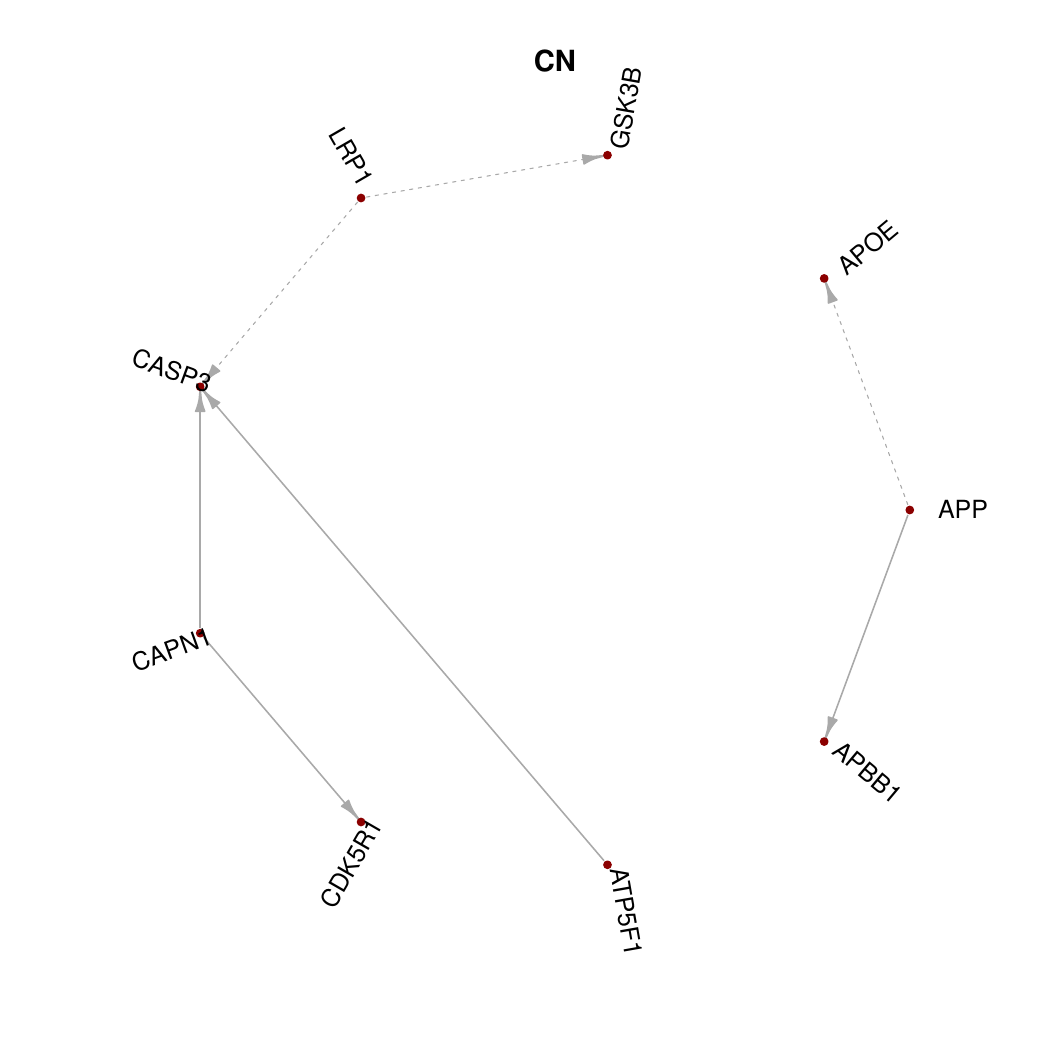}
        \caption{CN}
    \end{subfigure}
    \caption{Display of the genes associated with proposed tests. (a) and (b): Solid/dashed arrows indicate significant/insignificant edges at $\alpha=0.05$ after adjustment for multiplicity by the Bonferroni-Holm correction.}
    \label{Fig:ADNI1}
\end{figure}

\paragraph*{Hypotheses.}
We focus on statistical inferences related to genes APP and CASP3 \citep{julia2017genetics,su2001activated}. As in Figure \ref{Fig:ADNI1}, for each edge $(k,j)$, we consider testing
$H_{0}: \mathrm{U}_{k j}=0 \text { versus } H_{a}: \mathrm{U}_{k j} \neq 0$.

\paragraph*{Results.}
Figure \ref{Fig:ADNI1} displays the p-values and significant results under the level $\alpha=0.05$ after the Holm-Bonferroni adjustment for $2\times7=14$ tests. The tests exhibit strong evidence for the presence of $\{\mathrm{LRP1} \rightarrow \mathrm{CASP3}, \ \mathrm{APP} \rightarrow \mathrm{APOE}\}$ in the AD-MCI group, but no evidence in the CN group. Meanwhile, this result suggests the presence of connections $\{\mathrm{CAPN1} \rightarrow \mathrm{CASP3}, \ \mathrm{ATP5F1} \rightarrow \text {CASP3}\}$ in the CN group but not so in the AD-MCI group. In both groups, we identify directed connection $\mathrm{APP} \rightarrow \text {APBB1}$. Figure \ref{Fig:resi_corr} shows the residual correlation matrices for both groups, suggesting the existence of unmeasured confounding. The Supplementary Materials include normal Q-Q plots of residuals, demonstrating that the normality assumption is approximately satisfied for both groups.

\begin{figure}[H]
    \centering
    \begin{subfigure}{.4\textwidth}
        \includegraphics[width=1\textwidth]{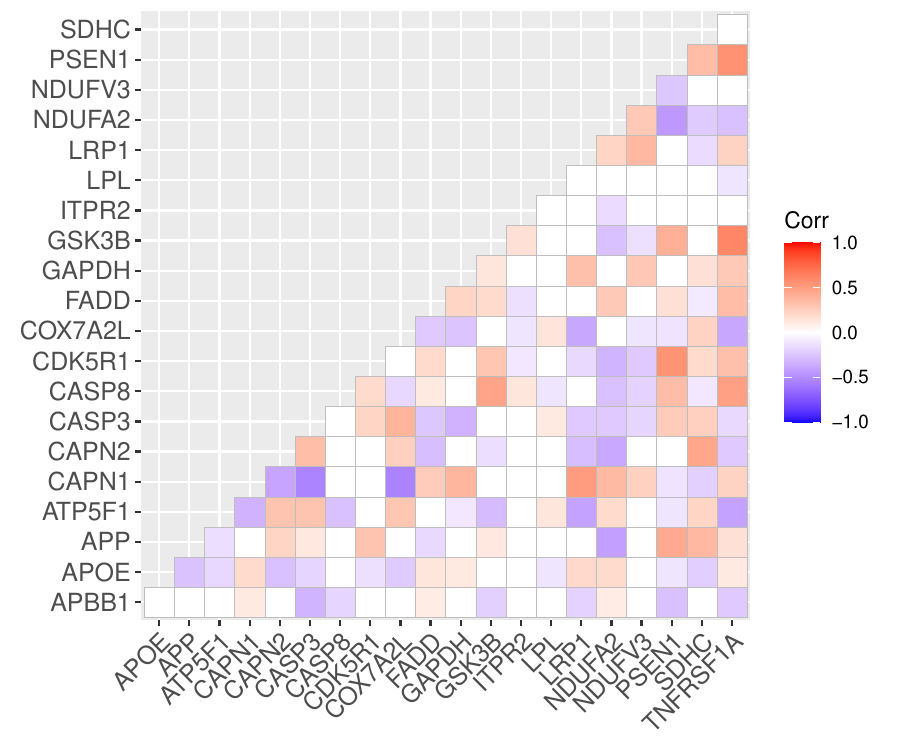}
        \caption{AD-MCI}
    \end{subfigure}
    \begin{subfigure}{.4\textwidth}
        \includegraphics[width=1\textwidth]{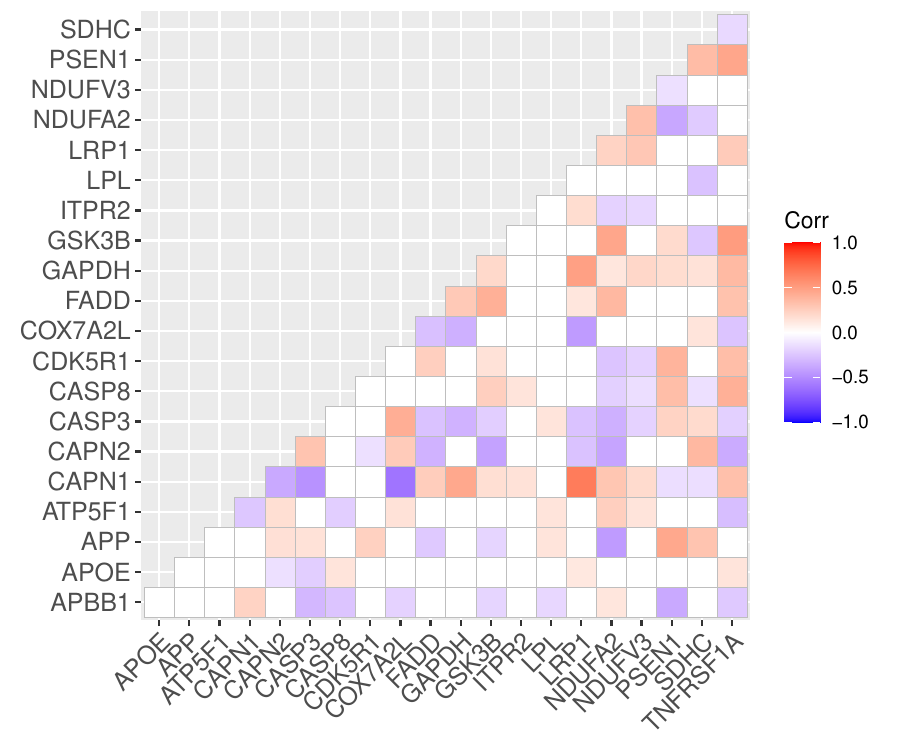}
        \caption{CN}
    \end{subfigure}
    \caption{Display of residual correlation matrices for AD-MCI and CN groups.} 
    \label{Fig:resi_corr}
\end{figure} 

Some of our discoveries agree with the existing findings. Specifically, our result indicates the presence of connection APP $\rightarrow$ APOE for the AD-MCI group, but not for the CN group, which seems consistent with the knowledge that APP and APOE are functionally linked in brain cholesterol metabolism \citep{liu2017effective} and the contributions of APOE to the pathophysiology of AD \citep{bu2009apolipoprotein}. The connection LRP1 $\rightarrow$ CASP3 also differs in AD-MCI and CN groups, which may serve to support the conclusion that activated CASP3 may be a factor in functional decline and may have an important role in neuronal cell death and plaque formation in AD brain \citep{su2001activated} given the finding that both APOE and its receptor LRP1 are present in amyloid plaques \citep{poirier1996apolipoprotein}. Moreover, the connection CAPN1 $\rightarrow$ CDK5R1 discovered in both groups can be found in the AlzNet database (interaction ID 24614).

\section{Discussion}\label{section:discussion}

This article proposes a novel instrumental variable procedure that integrates causal discovery and inference for a Gaussian directed acyclic graph with hidden confounders. One future research direction is to develop methodologies for analyzing  discrete/mixed-type (primary variable) data. Additionally, the present work uses individual-level data from a single study for causal discovery and inference. In many real applications, due to privacy concerns and ownership restrictions, the data are only available in the form of summary statistics (e.g., GWAS summary data) or in other privatized forms. Extending GrIVET to leverage these data is an important topic. Furthermore, multisource/decentralized data are ubiquitous, raising new challenges in communication, privacy, and handling of corrupted data. It would be promising to employ modern machine learning techniques, such as federated learning \citep{xiong2021federated,gao2021feddag}, to address these challenges and fully unleash the potential of large-scale causal discovery and inference.

Finally, we discuss two limitations of the present work. 
\begin{itemize}
    \item GrIVET necessitates the availability of valid IVs for each primary variable due to the hardness of causal identification in the presence of hidden confounding. In genetic research, there is an ample supply of genetic variants (e.g., SNPs) serving as IVs. Nonetheless, obtaining valid IVs can be challenging in certain applications. It is thus crucial to investigate the potential for causal discovery even when faced with an insufficient number of IVs.

    \item For inference, Theorem \ref{theorem:asymptotic} requires that $\mathbb P(\widehat{\mathcal G}^+=\mathcal G^+)\to 1$, which is guaranteed by Condition (C2) in Theorem \ref{theorem:consistency}. Fulfilling this requirement can be challenging; in such cases, one might turn to the post-selection inference framework \citep{berk2013valid} by concentrating on the parameters within the selected model. However, the test results should be meticulously interpreted, as these parameters cease to be causal or structural \citep{berk2013valid} unless $\mathbb P(\widehat{\mathcal G}^+=\mathcal G^+)\to 1$. In essence, (C2) enables the causal meaning of the tested parameters to be carried over to finite-sample inference. Exploring ways to lift the signal strength condition while preserving the causal interpretation for statistical inference after DAG structure learning \citep{wang2023confidence} is an important research topic.
\end{itemize}

\appendix

\section{Appendix}

\paragraph{Definition of d-separation \citep{pearl2009causality}.}

Consider a DAG $\mathcal G$ with node variables $(Z_1,\ldots,Z_{d})^\top$. 
Nodes $Z_k$ and $Z_j$ are adjacent if $Z_k\to Z_j$ or $Z_k\leftarrow Z_j$.
An undirected path between $Z_k$ and $Z_j$ in $\mathcal G$ is a sequence of distinct nodes $(Z_k,\ldots,Z_j)$ such that all pairs of successive nodes in the sequence are adjacent. 
A non-endpoint node $Z_m$ on an undirected path $(Z_k,\ldots,Z_{m-1},Z_m,Z_{m+1},\ldots,Z_j)$ is called a collider if $Z_{m-1}\to Z_m\leftarrow Z_{m+1}$. Otherwise, it is called a {non-collider}.
Let $A\subseteq \{1,\ldots,d\}$, where $A$ does not contain $k$ and $j$.
Then $\bm Z_A$ is said to block an undirected path $(Z_k,\ldots,Z_j)$ if at least one of the following holds: (1) the undirected path contains a non-collider that is in $\bm Z_A$, or (2) the undirected path contains a collider that is not in $\bm Z_A$ and has no descendant in $\bm Z_A$.
A node $Z_k$ is d-separated from $Z_j$ given $\bm Z_A$ if $\bm Z_A$ block every undirected path between $Z_k$ and $Z_j$; $k\neq j$.

\paragraph{Additional discussion of Figure \ref{fig:method} (a).}
Let $(k,j)\in\mathcal E^+$ and suppose all IVs are valid. 
We explain why $\bm X_{\ca(k)}$ may not be valid IVs after conditioning on $\bm Y_{\an(j)\setminus\{ k\} }$, as mentioned in Section 3.3. 
Let $l\in\ca(k)$ and $m\in\me(k,j)$ such that $Y_k$ is an unmediated parent of $Y_m$.
Note that in Figure 1 (a) of the main text, whenever $\eta\to Y_m$, then $\bm Y_{\an(j)\setminus\{ k\} }$ does not d-separate $\bm X_{\ca(k)}$ and $\eta$, since $Y_{m}$ is a collider in the undirected path $(X_{l}, Y_k, Y_{m}, \eta, Y_j)$.
As a result, $\bm X_{\ca(k)}$ and $\eta$ can be associated conditioned on $\bm Y_{\an(j)\setminus\{ k\} }$.

\paragraph{Additional discussion on identification of $\mathbf U$.} 
We have the following result.

\begin{lemma}\label{lemma:linear}
In \eqref{equation:model}, assume $\bm X$ and $\bm\varepsilon$ are independent.  
\begin{enumerate}[(A)]
    \item $\mathbb E(Y_k \mid \bm Y_{\nui(k,j)}, \bm X)$ is a linear combination of $(\bm Y_{\nui(k,j)}, \bm X)$.
    \item $\mathbb E (\varepsilon_j \mid \bm Y_{\nui(k,j)}, \bm X )$ is a linear combination of $(\bm Y_{\nui(k,j)}, \bm X_{\ca(k)^c})$.
\end{enumerate}
\end{lemma}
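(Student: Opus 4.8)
The plan is to exploit the Gaussianity of $(\bm Y, \bm X)$ together with the structural equations \eqref{equation:y-x-model}, so that all conditional expectations are automatically linear, and then to use a d-separation argument to pin down which coordinates actually appear in the linear form for part (B). For part (A), I would first observe that under model \eqref{equation:model} with $\bm X \independent \bm\varepsilon$ and $\bm\varepsilon\sim N(\bm 0,\bm\Sigma)$, the joint vector $(\bm Y,\bm X)$ is jointly Gaussian (recalling $\bm Y = \mathbf V^\top\bm X + (\mathbf I - \mathbf U^\top)^{-1}\bm\varepsilon$ from \eqref{equation:y-x-model}, a linear image of the Gaussian vector $(\bm X,\bm\varepsilon)$, provided $\Cov(\bm X)\succ 0$ by (A1)). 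Hence for any index $k$ and any subset $S$ of the remaining coordinates of $(\bm Y,\bm X)$, the conditional expectation $\mathbb E(Y_k \mid (\bm Y,\bm X)_S)$ is an affine — indeed linear, since all means are zero — function of $(\bm Y,\bm X)_S$. Taking $S = \nui(k,j)\cup\{1,\ldots,q\}$ (the $\bm X$-block) gives part (A) immediately.

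For part (B), the same Gaussian argument shows $\mathbb E(\varepsilon_j \mid \bm Y_{\nui(k,j)}, \bm X)$ is a linear combination of $(\bm Y_{\nui(k,j)},\bm X)$; the work is to show the coefficients on $\bm X_{\ca(k)}$ vanish, so that only $(\bm Y_{\nui(k,j)}, \bm X_{\ca(k)^c})$ survive. By the Gaussian-graphical correspondence it suffices to show that $\varepsilon_j$ is conditionally independent of $\bm X_{\ca(k)}$ given $(\bm Y_{\nui(k,j)}, \bm X_{\ca(k)^c})$, which I would establish via d-separation in the augmented DAG whose nodes are $(\bm Y, \bm X, \bm\eta, \varepsilon_j)$ — treating $\varepsilon_j$ as a node with a single arrow $\varepsilon_j \to Y_j$, and $\bm\eta$ with arrows into the primary variables it confounds. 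The key structural facts to invoke are: $X_l$ for $l\in\ca(k)$ intervenes only on $Y_k$ and on descendants of $Y_k$ (Definition \ref{definition:candidate-iv}), none of which lie in $\nui(k,j) = \an(j)\setminus(\me(k,j)\cup\{k\})$; and $\nui(k,j)$ contains no descendant of $Y_k$, so conditioning on $\bm Y_{\nui(k,j)}$ opens no collider path from $\bm X_{\ca(k)}$ through $Y_k$ or its descendants. Any undirected path from $X_l$ ($l\in\ca(k)$) to $\varepsilon_j$ must pass either through $Y_k$ (or a descendant of $Y_k$) — where it is blocked because that node is a non-collider not in the conditioning set, or is a collider with no descendant in the conditioning set — or through some $X_{l'}\to Y_{k'}$ with $k'$ a non-descendant of $Y_k$, which forces $X_{l'}\notin\ca(k)$ so $X_{l'}\in\bm X_{\ca(k)^c}$, again blocking the path as a non-collider in the conditioning set. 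Enumerating the path types and checking each is blocked gives the conditional independence, hence the claimed linear form.

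The main obstacle I anticipate is the d-separation bookkeeping in part (B): I must be careful about paths that reach $\varepsilon_j$ by first going through a mediator $Y_m\in\me(k,j)$ (which is \emph{not} in the conditioning set) and then through a confounder $\eta\to Y_m$, since $Y_m$ is a collider on such a path and $Y_m$ itself is excluded from $\bm Y_{\nui(k,j)}$ — I need to verify that such paths are nonetheless blocked further along (e.g., they must eventually traverse a conditioned non-collider in $\nui(k,j)$, or a collider outside the conditioning set with no conditioned descendant) before reaching $\varepsilon_j$. A clean way to organize this is to argue that every undirected path from $X_l$, $l\in\ca(k)$, to $\varepsilon_j$ has a first node lying in $\{Y_k\}\cup\mathrm{descendants}(Y_k)$ and to show the path is blocked at or before that node, using that $\varepsilon_j$'s only neighbor is $Y_j$ and that $Y_j$ is a descendant of $Y_k$. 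I would also flag that the uncorrelated (rather than independent) case is deferred to Lemmas \ref{lemma:linear}–\ref{lemma:linear2} and does not need to be handled here.
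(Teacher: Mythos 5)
Your proposal reaches the right conclusion but by a genuinely different route from the paper's. The paper's argument for (B) is a three-line algebraic computation: since $\nui(k,j)=\an_{\mathcal G}(j)\setminus(\me_{\mathcal G}(k,j)\cup\{k\})$ is ancestrally closed, conditioning on $(\bm Y_{\nui(k,j)},\bm X)$ is the same as conditioning on $(\bm\varepsilon_{\nui(k,j)},\bm X)$; independence of $\bm X$ and $\bm\varepsilon$ collapses this to $\mathbb E(\varepsilon_j\mid\bm\varepsilon_{\nui(k,j)})=\bm\pi^\top\bm\varepsilon_{\nui(k,j)}$ by normality; and $\bm\varepsilon_{\nui(k,j)}$ is then recovered linearly from $(\bm Y_{\nui(k,j)},\bm X_{\ca(k)^c})$ because no candidate IV of $Y_k$ intervenes on a non-descendant of $Y_k$. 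Your d-separation argument instead proves the equivalent conditional-independence statement graphically; it is more laborious (you must enumerate path types) but makes the causal content of the exclusion of $\bm X_{\ca(k)}$ more transparent, and it is the same style of reasoning the paper uses in its Appendix discussion of Figure \ref{fig:method}(a). Your part (A) via joint Gaussianity is fine and matches the paper's one-line claim.

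One concrete issue you must repair in (B): your augmented DAG is internally inconsistent. You declare that ``$\varepsilon_j$'s only neighbor is $Y_j$'' while routing the confounding through arrows from $\bm\eta$ into the primary variables. If $\varepsilon_j$ is a root whose sole neighbor is $Y_j$, it represents the idiosyncratic noise $e_j$, not the correlated error $\varepsilon_j=\bm\Phi_{+,j}^\top\bm\eta+e_j$ that appears in the lemma, and the d-separation you would prove concerns the wrong random variable (and is trivially true, since such a node is marginally independent of everything outside its descendants). The node for $\varepsilon_j$ must receive arrows from every $\eta_m$ with $\Phi_{mj}\neq 0$, and your path enumeration must then cover paths terminating in $\eta_m\to\varepsilon_j$. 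The fix is essentially the organizing principle you already state: by Definition \ref{definition:candidate-iv}, every path from $X_l$ with $l\in\ca(k)$ begins $X_l\to Y_{j'}$ with $Y_{j'}$ equal to $Y_k$ or a descendant of $Y_k$, so its first collider lies in $\{Y_k\}\cup\mathrm{desc}(Y_k)$, a set disjoint from $\nui(k,j)$ and having no descendants in the conditioning set; and every such path must contain a collider because $X_l$ and the $\eta_m$ are all root nodes. With that correction the argument goes through.
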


\begin{proof}
Here, (A) follows directly from \eqref{equation:model}. For (B), we have 
\begin{equation*}
    \mathbb E (\varepsilon_j \mid \bm Y_{\nui(k,j)}, \bm X ) = \mathbb E(\varepsilon_j \mid \bm \varepsilon_{\nui(k,j)}, \bm X) = \mathbb E(\varepsilon_j \mid \bm\varepsilon_{\nui(k,j)}) = \bm\pi^\top\bm\varepsilon_{\nui(k,j)},
\end{equation*}
where the last equality is due to the normality of $\bm\varepsilon$.
Finally, in \eqref{equation:model}, we immediately have $\bm\varepsilon_{\nui(k,j)}$ is linear in $(\bm Y_{\nui(k,j)}, \bm X_{\ca(k)^c})$.
\end{proof}

Now, we show that $\Cov(\bm\varepsilon,\bm X)=\bm 0$ is sufficient to derive the identification results in Section \ref{section:known-order-estimation}.
Given random variables $\zeta$ and $\bm \xi$, 
let $\mathbb L( \zeta \mid \bm\xi )$ be the best linear approximation of $\zeta$ using $\bm\xi$, namely $\mathbb L(\zeta\mid \bm \xi) = \widetilde{\bm\omega}^\top \bm\xi$ where
\begin{equation*}
  \widetilde{\bm\omega} = \argmin_{\bm\omega} \ \E(\zeta - \bm\omega^\top\bm\xi)^2.
\end{equation*}
For random variables $\zeta$, $\zeta'$, and $\bm\xi$, we have that (a) $\mathbb{L}(\zeta + \zeta'\mid \bm\xi) = \mathbb{L}(\zeta \mid \bm\xi) + \mathbb{L}(\zeta'\mid \bm\xi)$, (b) $\mathbb{L}(c \zeta \mid \bm\xi) = c \mathbb{L}(\zeta \mid \bm\xi)$ for $c \in \mathbb{R}$, (c) $\mathbb{L}(\zeta \mid \bm\xi) = 0$ if $\Cov(\zeta,\bm\xi) = \bm 0$, (d) $\mathbb{L}(\zeta \mid \bm\xi) = \zeta$ if $\zeta \in \operatorname{Span}(\bm\xi)$, and (e) $\mathbb{L}(\zeta \mid \bm\xi) = \mathbb{L}(\zeta \mid \mathbf A \bm \xi)$ for invertible $\mathbf A$. Thus, $\mathbb L(\cdot\mid \star)$ mimics $\mathbb E(\cdot\mid \star)$, and Lemma \ref{lemma:linear2} holds.
The proof is similar to that of Lemma \ref{lemma:linear}.

\begin{lemma}\label{lemma:linear2}
    In \eqref{equation:model}, Lemma \ref{lemma:linear} holds with $\mathbb E(\cdot\mid \star)$ being replaced by $\mathbb L(\cdot\mid\star)$.
\end{lemma}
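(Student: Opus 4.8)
The plan is to mirror the proof of Lemma \ref{lemma:linear}, replacing each use of independence or normality by the algebraic rules (a)--(e) recorded for $\mathbb L(\cdot\mid\star)$. Part (A) needs no argument: $\mathbb L(Y_k\mid\bm Y_{\nui(k,j)},\bm X)$ is, by the very definition of best linear approximation, a linear combination of $(\bm Y_{\nui(k,j)},\bm X)$. All the work is in part (B), which I would organize into the same three moves as in the $\mathbb E$-version: (1) replace the conditioning vector $(\bm Y_{\nui(k,j)},\bm X)$ by $(\bm\varepsilon_{\nui(k,j)},\bm X)$; (2) drop $\bm X$ from the conditioning; (3) re-express $\bm\varepsilon_{\nui(k,j)}$ through $(\bm Y_{\nui(k,j)},\bm X_{\ca(k)^c})$ using \eqref{equation:model}.

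The enabling combinatorial fact, which I would establish first, is that $\nui(k,j)$ is ancestrally closed and that every intervention hitting $\nui(k,j)$ lies outside $\ca(k)$. Indeed, if $m\in\nui(k,j)$ then $m\in\an_{\mathcal G}(j)$, so any $m'\in\an_{\mathcal G}(m)$ also lies in $\an_{\mathcal G}(j)$; moreover $m'\neq k$ and $m'\notin\me_{\mathcal G}(k,j)$, since either would produce a directed path $k\to\cdots\to m\to\cdots\to j$ and hence force $m\in\me_{\mathcal G}(k,j)$, a contradiction. Thus $\an_{\mathcal G}(m)\cup\{m\}\subseteq\nui(k,j)$, in particular $\pa_{\mathcal G}(m)\subseteq\nui(k,j)$ while $k\notin\an_{\mathcal G}(m)$. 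Consequently, for $l\in\inter_{\mathcal G}(m)$ the edge $X_l\to Y_m$ together with $k\notin\an_{\mathcal G}(m)$ violates (IV2'), so $l\in\ca(k)^c$.

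For move (1), I would use the reduced form \eqref{equation:y-x-model}: since $\nui(k,j)$ is ancestrally closed, restricting to its rows gives $\bm Y_{\nui(k,j)}=\mathbf B\bm X+\mathbf A\,\bm\varepsilon_{\nui(k,j)}$ for some matrix $\mathbf B$, where $\mathbf A=\bigl[(\mathbf I-\mathbf U^\top)^{-1}\bigr]_{\nui(k,j),\nui(k,j)}$ equals the inverse of the principal submatrix $(\mathbf I-\mathbf U^\top)_{\nui(k,j),\nui(k,j)}$ (a unitriangular matrix in a topological order of $\nui(k,j)$) and is therefore invertible. Hence $(\bm Y_{\nui(k,j)},\bm X)$ is an invertible linear image of $(\bm\varepsilon_{\nui(k,j)},\bm X)$, and rule (e) yields $\mathbb L(\varepsilon_j\mid\bm Y_{\nui(k,j)},\bm X)=\mathbb L(\varepsilon_j\mid\bm\varepsilon_{\nui(k,j)},\bm X)$. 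For move (2), write $\varepsilon_j=\bm\pi^\top\bm\varepsilon_{\nui(k,j)}+\delta$ with $\bm\pi^\top\bm\varepsilon_{\nui(k,j)}=\mathbb L(\varepsilon_j\mid\bm\varepsilon_{\nui(k,j)})$, so $\Cov(\delta,\bm\varepsilon_{\nui(k,j)})=\bm 0$; since $\delta$ is a linear combination of the coordinates of $\bm\varepsilon$ and $\Cov(\bm\varepsilon,\bm X)=\bm 0$, also $\Cov(\delta,\bm X)=\bm 0$, hence $\mathbb L(\delta\mid\bm\varepsilon_{\nui(k,j)},\bm X)=0$ by rule (c), and rules (a), (b), (d) give $\mathbb L(\varepsilon_j\mid\bm\varepsilon_{\nui(k,j)},\bm X)=\bm\pi^\top\bm\varepsilon_{\nui(k,j)}$. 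Finally, for move (3), \eqref{equation:model} gives $\varepsilon_m=Y_m-\sum_{m'\in\pa_{\mathcal G}(m)}\mathrm U_{m'm}Y_{m'}-\sum_{l\in\inter_{\mathcal G}(m)}\mathrm W_{lm}X_l$, which by the combinatorial fact is a linear combination of $(\bm Y_{\nui(k,j)},\bm X_{\ca(k)^c})$ for every $m\in\nui(k,j)$; hence so is $\bm\pi^\top\bm\varepsilon_{\nui(k,j)}=\mathbb L(\varepsilon_j\mid\bm Y_{\nui(k,j)},\bm X)$, which is the assertion.

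I expect the only genuine obstacle to be the bookkeeping in the combinatorial fact — ancestral closure of $\nui(k,j)$ and the consequent disjointness from $\ca(k)$ — since this single observation is what makes the change of variables in move (1) invertible and simultaneously places the residual-equation terms of move (3) inside $(\bm Y_{\nui(k,j)},\bm X_{\ca(k)^c})$. Once it is in hand, the remainder is a routine application of rules (a)--(e) that parallels the proof of Lemma \ref{lemma:linear} line by line.
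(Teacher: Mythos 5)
Your proposal is correct and follows essentially the same route as the paper, which proves Lemma~\ref{lemma:linear2} by repeating the three steps of Lemma~\ref{lemma:linear} with the algebraic rules (a)--(e) standing in for properties of conditional expectation (rule (e) for the change of conditioning set, rule (c) for dropping $\bm X$, and linearity of the structural equations for the final re-expression). You simply make explicit the bookkeeping the paper leaves implicit — the ancestral closure of $\nui(k,j)$, the invertibility of the map from $(\bm\varepsilon_{\nui(k,j)},\bm X)$ to $(\bm Y_{\nui(k,j)},\bm X)$, and the inclusion $\inter_{\mathcal G}(m)\subseteq\ca_{\mathcal G}(k)^c$ for $m\in\nui(k,j)$ — and all of these checks are accurate.
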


As a result, if $\bm X$ and $\bm\varepsilon$ are uncorrelated as in \eqref{equation:model}, the derivation in Section \ref{section:known-order-estimation} holds with $\mathbb E(\cdot\mid \star)$ being replaced by $\mathbb L(\cdot\mid\star)$.

\bibliographystyle{apalike}
\bibliography{ref}

\begin{thebibliography}{}

\bibitem[Angrist et~al., 1996]{angrist1996identification}
Angrist, J.~D., Imbens, G.~W., and Rubin, D.~B. (1996).
\newblock Identification of causal effects using instrumental variables.
\newblock {\em Journal of the American Statistical Association},
  91(434):444--455.

\bibitem[Aragam et~al., 2019]{aragam2019globally}
Aragam, B., Amini, A.~A., and Zhou, Q. (2019).
\newblock Globally optimal score-based learning of directed acyclic graphs in
  high-dimensions.
\newblock In {\em Proceedings of the 33rd International Conference on Neural
  Information Processing Systems}, pages 4450--4462.

\bibitem[Berk et~al., 2013]{berk2013valid}
Berk, R., Brown, L., Buja, A., Zhang, K., and Zhao, L. (2013).
\newblock Valid post-selection inference.
\newblock {\em The Annals of Statistics}, 41(2):802--837.

\bibitem[Bickel et~al., 2009]{bickel2009simultaneous}
Bickel, P.~J., Ritov, Y., and Tsybakov, A.~B. (2009).
\newblock Simultaneous analysis of {Lasso} and {Dantzig} selector.
\newblock {\em The Annals of Statistics}, 37(4):1705--1732.

\bibitem[Bu, 2009]{bu2009apolipoprotein}
Bu, G. (2009).
\newblock {Apolipoprotein E} and its receptors in {Alzheimer's} disease:
  pathways, pathogenesis and therapy.
\newblock {\em Nature Reviews Neuroscience}, 10(5):333--344.

\bibitem[Burgess et~al., 2020]{burgess2020robust}
Burgess, S., Foley, C.~N., Allara, E., Staley, J.~R., and Howson, J.~M. (2020).
\newblock A robust and efficient method for {Mendelian} randomization with
  hundreds of genetic variants.
\newblock {\em Nature Communications}, 11(1):1--11.

\bibitem[Chakrabortty et~al., 2018]{chakrabortty2018inference}
Chakrabortty, A., Nandy, P., and Li, H. (2018).
\newblock Inference for individual mediation effects and interventional effects
  in sparse high-dimensional causal graphical models.
\newblock {\em arXiv preprint arXiv:1809.10652}.

\bibitem[Chen et~al., 2018]{chen2018two}
Chen, C., Ren, M., Zhang, M., and Zhang, D. (2018).
\newblock A two-stage penalized least squares method for constructing large
  systems of structural equations.
\newblock {\em Journal of Machine Learning Research}, 19(1):40--73.

\bibitem[Colombo et~al., 2012]{colombo2012learning}
Colombo, D., Maathuis, M.~H., Kalisch, M., and Richardson, T.~S. (2012).
\newblock Learning high-dimensional directed acyclic graphs with latent and
  selection variables.
\newblock {\em The Annals of Statistics}, 40(1):294--321.

\bibitem[Drton and Maathuis, 2017]{drton2017structure}
Drton, M. and Maathuis, M.~H. (2017).
\newblock Structure learning in graphical modeling.
\newblock {\em Annual Review of Statistics and Its Application}, 4:365--393.

\bibitem[Frot et~al., 2019]{frot2019robust}
Frot, B., Nandy, P., and Maathuis, M.~H. (2019).
\newblock Robust causal structure learning with some hidden variables.
\newblock {\em Journal of the Royal Statistical Society: Series B (Statistical
  Methodology)}, 81(3):459--487.

\bibitem[Gao et~al., 2021]{gao2021feddag}
Gao, E., Chen, J., Shen, L., Liu, T., Gong, M., and Bondell, H. (2021).
\newblock {FedDAG}: Federated {DAG} structure learning.
\newblock {\em Transactions on Machine Learning Research}.

\bibitem[Ghoshal and Honorio, 2018]{ghoshal2018learning}
Ghoshal, A. and Honorio, J. (2018).
\newblock Learning linear structural equation models in polynomial time and
  sample complexity.
\newblock In {\em International Conference on Artificial Intelligence and
  Statistics}, pages 1466--1475. PMLR.

\bibitem[Glymour et~al., 2019]{glymour2019review}
Glymour, C., Zhang, K., and Spirtes, P. (2019).
\newblock Review of causal discovery methods based on graphical models.
\newblock {\em Frontiers in Genetics}, 10.

\bibitem[Grimmer et~al., 2020]{grimmer2020naive}
Grimmer, J., Knox, D., and Stewart, B.~M. (2020).
\newblock Na{\"i}ve regression requires weaker assumptions than factor models
  to adjust for multiple cause confounding.
\newblock {\em arXiv preprint arXiv:2007.12702}.

\bibitem[Gu et~al., 2019]{gu2019penalized}
Gu, J., Fu, F., and Zhou, Q. (2019).
\newblock Penalized estimation of directed acyclic graphs from discrete data.
\newblock {\em Statistics and Computing}, 29(1):161--176.

\bibitem[Guo et~al., 2018]{guo2018confidence}
Guo, Z., Kang, H., Tony~Cai, T., and Small, D.~S. (2018).
\newblock Confidence intervals for causal effects with invalid instruments by
  using two-stage hard thresholding with voting.
\newblock {\em Journal of the Royal Statistical Society: Series B (Statistical
  Methodology)}, 80(4):793--815.

\bibitem[Heinze-Deml et~al., 2018]{heinze2018causal}
Heinze-Deml, C., Maathuis, M.~H., and Meinshausen, N. (2018).
\newblock Causal structure learning.
\newblock {\em Annual Review of Statistics and Its Application}, 5:371--391.

\bibitem[Jankov{\'a} and {van de Geer}, 2018]{jankova2018inference}
Jankov{\'a}, J. and {van de Geer}, S. (2018).
\newblock Inference in high-dimensional graphical models.
\newblock In {\em Handbook of Graphical Models}, pages 325--350. CRC Press.

\bibitem[Julia and Goate, 2017]{julia2017genetics}
Julia, T. and Goate, A.~M. (2017).
\newblock Genetics of $\beta$-amyloid precursor protein in {Alzheimer's}
  disease.
\newblock {\em Cold Spring Harbor Perspectives in Medicine}, 7(6).

\bibitem[Kang et~al., 2016]{kang2016instrumental}
Kang, H., Zhang, A., Cai, T.~T., and Small, D.~S. (2016).
\newblock Instrumental variables estimation with some invalid instruments and
  its application to {Mendelian} randomization.
\newblock {\em Journal of the American Statistical Association},
  111(513):132--144.

\bibitem[Kertel et~al., 2022]{kertel2022learning}
Kertel, M., Harmeling, S., and Pauly, M. (2022).
\newblock Learning causal graphs in manufacturing domains using structural
  equation models.
\newblock {\em arXiv preprint arXiv:2210.14573}.

\bibitem[Lee and Li, 2022]{lee2022functional}
Lee, K.-Y. and Li, L. (2022).
\newblock Functional structural equation model.
\newblock {\em Journal of the Royal Statistical Society: Series B (Statistical
  Methodology)}, 84(2):600--629.

\bibitem[Li et~al., 2020a]{li2020likelihood}
Li, C., Shen, X., and Pan, W. (2020a).
\newblock Likelihood ratio tests for a large directed acyclic graph.
\newblock {\em Journal of the American Statistical Association},
  115(531):1304--1319.

\bibitem[Li et~al., 2023a]{li2023inference}
Li, C., Shen, X., and Pan, W. (2023a).
\newblock Inference for a large directed acyclic graph with unspecified
  interventions.
\newblock {\em Journal of Machine Learning Research}, 24(73):1--48.

\bibitem[Li et~al., 2023b]{li2023nonlinear}
Li, C., Shen, X., and Pan, W. (2023b).
\newblock Nonlinear causal discovery with confounders.
\newblock {\em Journal of the American Statistical Association}, pages 1--32.

\bibitem[Li et~al., 2022]{li2022sequential}
Li, L., Shi, C., Guo, T., and Jagust, W.~J. (2022).
\newblock Sequential pathway inference for multimodal neuroimaging analysis.
\newblock {\em Stat}, 11(1):e433.

\bibitem[Li et~al., 2020b]{li2020causal}
Li, Y., Torralba, A., Anandkumar, A., Fox, D., and Garg, A. (2020b).
\newblock Causal discovery in physical systems from videos.
\newblock In {\em Proceedings of the 34th International Conference on Neural
  Information Processing Systems}, pages 9180--9192.

\bibitem[Liu et~al., 2017]{liu2017effective}
Liu, Z., Zhang, M., Xu, G., Huo, C., Tan, Q., Li, Z., and Yuan, Q. (2017).
\newblock Effective connectivity analysis of the brain network in drivers
  during actual driving using near-infrared spectroscopy.
\newblock {\em Frontiers in Behavioral Neuroscience}, 11:211.

\bibitem[Lousdal, 2018]{lousdal2018introduction}
Lousdal, M.~L. (2018).
\newblock An introduction to instrumental variable assumptions, validation and
  estimation.
\newblock {\em Emerging Themes in Epidemiology}, 15(1):1--7.

\bibitem[Meinshausen and B{\"u}hlmann, 2006]{meinshausen2006high}
Meinshausen, N. and B{\"u}hlmann, P. (2006).
\newblock High-dimensional graphs and variable selection with the {Lasso}.
\newblock {\em The Annals of Statistics}, 34(3):1436--1462.

\bibitem[Murray, 2006]{murray2006avoiding}
Murray, M. (2006).
\newblock Avoiding invalid instruments and coping with weak instruments.
\newblock {\em Journal of Economic Perspectives}, 20(4):111--132.

\bibitem[Oates et~al., 2016]{oates2016estimating}
Oates, C.~J., Smith, J.~Q., and Mukherjee, S. (2016).
\newblock Estimating causal structure using conditional {DAG} models.
\newblock {\em Journal of Machine Learning Research}, 17(1):1880--1903.

\bibitem[Pearl, 2009]{pearl2009causality}
Pearl, J. (2009).
\newblock {\em Causality}.
\newblock Cambridge University Press.

\bibitem[Peters and B{\"u}hlmann, 2014]{peters2014identifiability}
Peters, J. and B{\"u}hlmann, P. (2014).
\newblock Identifiability of {Gaussian} structural equation models with equal
  error variances.
\newblock {\em Biometrika}, 101(1):219--228.

\bibitem[Poirier, 1996]{poirier1996apolipoprotein}
Poirier, J. (1996).
\newblock {Apolipoprotein E} in the brain and its role in {Alzheimer's}
  disease.
\newblock {\em Journal of Psychiatry and Neuroscience}, 21(2):128--134.

\bibitem[Rajendran et~al., 2021]{rajendran2021structure}
Rajendran, G., Kivva, B., Gao, M., and Aragam, B. (2021).
\newblock Structure learning in polynomial time: Greedy algorithms, {Bregman}
  information, and exponential families.
\newblock In {\em Advances in Neural Information Processing Systems},
  volume~34, pages 18660--18672.

\bibitem[Reisach et~al., 2021]{reisach2021beware}
Reisach, A., Seiler, C., and Weichwald, S. (2021).
\newblock Beware of the simulated {DAG}! {Causal} discovery benchmarks may be
  easy to game.
\newblock {\em Advances in Neural Information Processing Systems},
  34:27772--27784.

\bibitem[Sachs et~al., 2005]{sachs2005causal}
Sachs, K., Perez, O., Pe'er, D., Lauffenburger, D.~A., and Nolan, G.~P. (2005).
\newblock Causal protein-signaling networks derived from multiparameter
  single-cell data.
\newblock {\em Science}, 308(5721):523--529.

\bibitem[Shah et~al., 2020]{shah2020right}
Shah, R.~D., Frot, B., Thanei, G.-A., and Meinshausen, N. (2020).
\newblock Right singular vector projection graphs: fast high dimensional
  covariance matrix estimation under latent confounding.
\newblock {\em Journal of the Royal Statistical Society: Series B (Statistical
  Methodology)}, 82(2):361--389.

\bibitem[Shen et~al., 2012]{shen2012likelihood}
Shen, X., Pan, W., and Zhu, Y. (2012).
\newblock Likelihood-based selection and sharp parameter estimation.
\newblock {\em Journal of the American Statistical Association},
  107(497):223--232.

\bibitem[Shi and Li, 2021]{shi2021testing}
Shi, C. and Li, L. (2021).
\newblock Testing mediation effects using logic of {Boolean} matrices.
\newblock {\em Journal of the American Statistical Association}, pages 1--14.

\bibitem[Shi et~al., 2023]{shi2023testing}
Shi, C., Zhou, Y., and Li, L. (2023).
\newblock Testing directed acyclic graph via structural, supervised and
  generative adversarial learning.
\newblock {\em Journal of the American Statistical Association}, pages 1--24.

\bibitem[Shimizu et~al., 2006]{shimizu2006linear}
Shimizu, S., Hoyer, P.~O., Hyv{\"a}rinen, A., and Kerminen, A. (2006).
\newblock A linear non-{Gaussian} acyclic model for causal discovery.
\newblock {\em Journal of Machine Learning Research}, 7:2003--2030.

\bibitem[Shojaie and Michailidis, 2010]{shojaie2010penalized}
Shojaie, A. and Michailidis, G. (2010).
\newblock Penalized likelihood methods for estimation of sparse
  high-dimensional directed acyclic graphs.
\newblock {\em Biometrika}, 97(3):519--538.

\bibitem[Su et~al., 2001]{su2001activated}
Su, J.~H., Zhao, M., Anderson, A.~J., Srinivasan, A., and Cotman, C.~W. (2001).
\newblock Activated caspase-3 expression in {Alzheimer's} and aged control
  brain: correlation with {Alzheimer} pathology.
\newblock {\em Brain Research}, 898(2):350--357.

\bibitem[Vowels et~al., 2021]{vowels2021d}
Vowels, M.~J., Camgoz, N.~C., and Bowden, R. (2021).
\newblock D'ya like {DAGs}? {A} survey on structure learning and causal
  discovery.
\newblock {\em ACM Computing Surveys (CSUR)}.

\bibitem[Wang et~al., 2023]{wang2023confidence}
Wang, Y.~S., Kolar, M., and Drton, M. (2023).
\newblock Confidence sets for causal orderings.
\newblock {\em arXiv preprint arXiv:2305.14506}.

\bibitem[Windmeijer et~al., 2019]{windmeijer2019use}
Windmeijer, F., Farbmacher, H., Davies, N., and Davey~Smith, G. (2019).
\newblock On the use of the lasso for instrumental variables estimation with
  some invalid instruments.
\newblock {\em Journal of the American Statistical Association},
  114(527):1339--1350.

\bibitem[Xiong et~al., 2021]{xiong2021federated}
Xiong, R., Koenecke, A., Powell, M., Shen, Z., Vogelstein, J.~T., and Athey, S.
  (2021).
\newblock Federated causal inference in heterogeneous observational data.
\newblock {\em arXiv preprint arXiv:2107.11732}.

\bibitem[Xue and Pan, 2020]{xue2020inferring}
Xue, H. and Pan, W. (2020).
\newblock Inferring causal direction between two traits in the presence of
  horizontal pleiotropy with {GWAS} summary data.
\newblock {\em PLoS Genetics}, 16(11):e1009105.

\bibitem[Yuan et~al., 2019]{yuan2019constrained}
Yuan, Y., Shen, X., Pan, W., and Wang, Z. (2019).
\newblock Constrained likelihood for reconstructing a directed acyclic
  {Gaussian} graph.
\newblock {\em Biometrika}, 106(1):109--125.

\bibitem[Zhao et~al., 2022]{zhao2022learning}
Zhao, R., He, X., and Wang, J. (2022).
\newblock Learning linear {non-Gaussian} directed acyclic graph with diverging
  number of nodes.
\newblock {\em Journal of Machine Learning Research}, 23(269):1--34.

\bibitem[Zheng et~al., 2018]{zheng2018dags}
Zheng, X., Aragam, B., Ravikumar, P., and Xing, E.~P. (2018).
\newblock {DAGs} with {NO TEARS}: continuous optimization for structure
  learning.
\newblock In {\em Proceedings of the 32nd International Conference on Neural
  Information Processing Systems}, pages 9492--9503.

\end{thebibliography}

\end{document}